\begin{document}

\title{On the system loophole of generalized non-contextuality}
\author{Victor Gitton}
\email{vgitton@ethz.ch}
\affiliation{Institute for Theoretical Physics, ETH Zürich, Switzerland}
\author{Mischa P.\ Woods}
\email{mischa.woods@gmail.com}
\affiliation{Institute for Theoretical Physics, ETH Zürich, Switzerland}
\affiliation{University Grenoble Alpes, Inria, Grenoble, France}
\maketitle

\begin{abstract}
Generalized noncontextuality is a well-studied notion of classicality that is applicable to a single system, as opposed to Bell locality. 
It relies on representing operationally indistinguishable procedures identically in an ontological model.
However, operational indistinguishability depends on the set of operations that one may use to distinguish two procedures: we refer to this set as the reference of indistinguishability.
Thus, whether or not a given experiment is noncontextual depends on the choice of reference.
The choices of references appearing in the literature are seldom discussed, but typically relate to a notion of system underlying the experiment.
This shift in perspective then begs the question: how should one define the extent of the system underlying an experiment?
Our paper primarily aims at exposing this question rather than providing a definitive answer to it.
We start by formulating a notion of relative noncontextuality for prepare-and-measure scenarios, which is simply noncontextuality with respect to an explicit reference of indistinguishability.
We investigate how verdicts of relative noncontextuality depend on this choice of reference, and in the process introduce the concept of the noncontextuality graph of a prepare-and-measure scenario. 
We then discuss several proposals that one may appeal to in order to fix the reference to a specific choice,
and relate these proposals to different conceptions of what a system really is.
With this discussion, we advocate that whether or not an experiment is noncontextual is not as absolute as often perceived.
\end{abstract}

\tableofcontents

\section{Introduction}

\paragraph{Non-classicality in quantum theory.}

On the face of it, quantum theory looks radically different from classical mechanics.
However, how does one capture this difference formally?
The seminal work of Bell \cite{bell_einstein_1964} demonstrated the stringent difference between correlations obtained across spacelike-separated agent depending on whether the resources they share are classical or quantum.
This gap is at the heart of many important quantum protocols, such as quantum key distribution \cite{ekert_quantum_1991}, quantum communication \cite{buhrman_nonlocality_2010}, or randomness certification \cite{pironio_random_2010} (see also \citeref{brunner_bell_2014} for a review of Bell nonlocality).
The settings of such protocols inevitably involve several agents that are macroscopically separated.
Yet, our intuition is that even a single-system quantum experiment defies our classical intuition.
Kochen and Specker \cite{kochen_problem_1967} (see \citeref{peres_incompatible_1990} for a simplified argument) paved the way of devising formal requirements that a hidden variable model should satisfy in the context of a single system.
The crucial idea therein is that joint measurability of quantum observables should be reflected at the level of the hidden variable model in a noncontextual manner, i.e., in a manner where the outcome of an observable $A$ does not depend on the context of whether one jointly measures the compatible observable $B$ or $C$ together with $A$.
This notion of noncontextuality has been shown to be a resource for quantum computation \cite{raussendorf_contextuality_2013,howard_contextuality_2014,bermejo-vega_contextuality_2017,frembs_contextuality_2018} and quantum communication \cite{saha_state_2019,gupta_quantum_2022}.

\paragraph{Generalized noncontextuality.}

Spekkens \cite{spekkens_contextuality_2005} conceptually generalized this notion to what is now known as generalized noncontextuality. 
Generalized noncontextuality is a constraint that is built on top of hidden variable models, and in this context, one would rather think of hidden variable models as being tentative ontological models of nature \cite{hardy_quantum_2004,spekkens_contextuality_2005}. An ontological model ought to describe what is really going on in a system, and the meaning of ``what is really going on'' is acquired through the constraints that one implements on top of the ontological model.
Specifically, the constraints of generalized noncontextuality follow from the principle that operational equivalences should be explained by equivalences within the ontological model.
This principle has been traced back to Leibniz and identified as a crucial methodological principle for the construction of general relativity in \citeref{spekkens_ontological_2019}.
More practically, generalized noncontextuality has been demonstrated to be a resource in parity-oblivious random access coding \cite{spekkens_preparation_2009,chailloux_optimal_2016,ambainis_parity_2019}, state discrimination \cite{schmid_contextual_2018,mukherjee_discriminating_2021,flatt_contextual_2021,shin_quantum_2021}, state-dependent cloning \cite{lostaglio_contextual_2020} and quantum communication \cite{yadavalli_contextuality_2021}.

\begin{figure}[ht]
\input{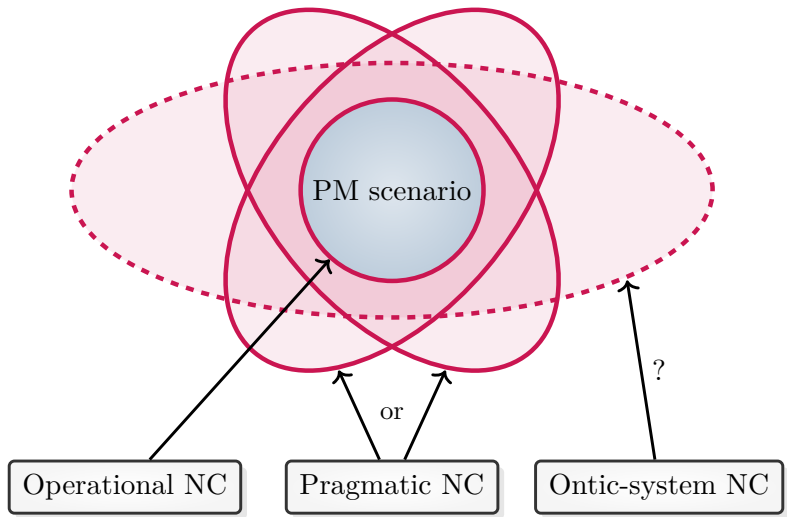}
\caption{Possible approaches to noncontextuality given a prepare-and-measure scenario (``NC'' stands for ''noncontextuality''). 
The operational procedures (i.e., preparations and measurements) of the prepare-and-measure scenario are enclosed in the shaded blue region. 
The regions enclosed by the red circles correspond to the set of procedures that should be used as a reference of indistinguishability in noncontextuality according to different approaches.
In operational noncontextuality, one chooses the procedures of the prepare-and-measure scenario as the reference. 
Pragmatic noncontextuality allows for investigating several choices of reference that extend beyond the prepare-and-measure scenario, and each choice of reference is thought of as being more or less relevant according to some pragmatic considerations. 
In ontic-system noncontextuality, the prepare-and-measure scenario is thought of as being a subset of procedures that probe the true extent of the ontic system's set of operations, although the latter is in general unknown.}
\label{fig:venn}
\end{figure}

\paragraph{Objectives of this work.}

The constraints of generalized noncontextuality require a notion of equivalence between operational procedures.
However, the exact methodology that one should follow to identify the relevant operational equivalences that hold in a given experiment often evades a precise account in the existing literature.
Interrogating this methodology has long-reaching implications. For instance, a primary object of study in this field is the existence of a generalized noncontextual ontological model for a prepare-and-measure scenario, and this existence depends on the operational equivalences that one takes into account when formulating the noncontextuality constraints.
The objective that we set for this work is the following. 
We would like to attempt to re-build a notion of generalized noncontextuality in a way that treats every assumption of the formalism in an operational manner when possible, and that otherwise discusses carefully the remaining conceptual aspects.
This leads us to formulate a notion of relative noncontextuality that explicitly features a choice of reference procedures that one uses to determine the operational equivalences that hold in the prepare-and-measure scenario, and to then discuss different proposals for how this choice can be fixed.
Each such proposal yields a different notion of noncontextuality, and some of these proposals are related to different conceptions of what a system really is: this is sketched in \cref{fig:venn}.
These proposals attempt to reflect the existing literature when possible.
As we shall argue, these proposals each have limitations with respect to the extent to which they are well-defined in their current stage or with respect to how they fit in the existing literature.
While we do not claim to be able to settle this discussion, we would like to bring it forward in the community. Furthermore, we would like to advocate in favor of more transparency with respect to the methodology that one chooses to endorse in a given study and more discussions with respect to the motivations and foundations of such methodologies.

\begin{figure}[ht]
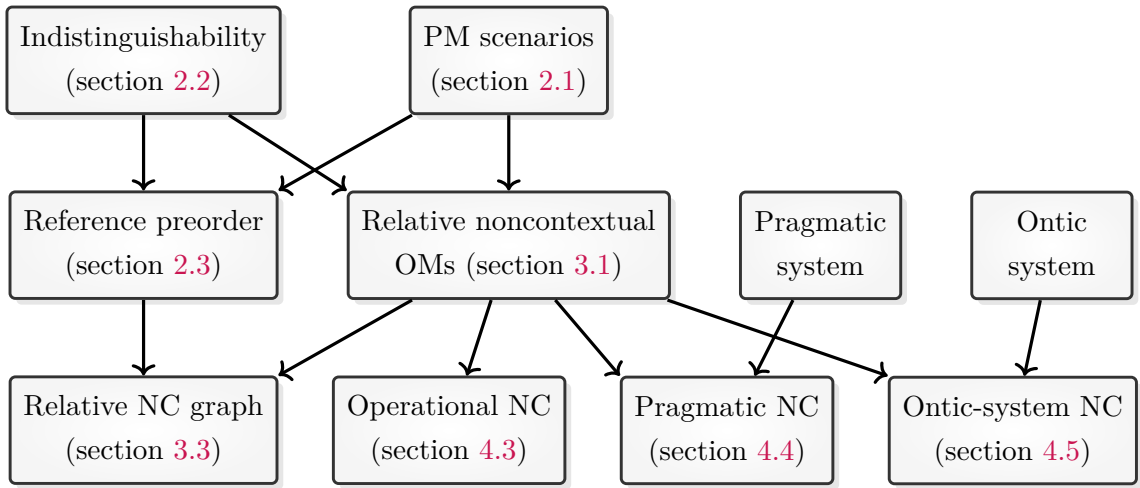

\include{mind_map}
\caption{The relationships between the main notions of this work. For brevity, we use ``PM'' for ``prepare-and-measure'', ``OMs'' for ``ontological models'', and ``NC'' for ``noncontextuality''. A directed edge indicates that the child notion builds on top on the parent notion.}
\label{fig:main notions}
\end{figure}

\paragraph{Outline of the paper.}

The main notions are summarized in \cref{fig:main notions}.
We start in \cref{sec:pm scenarios} by introducing the necessary operational ingredients. The formalism that we use is mostly standard in the literature and follows closely that of, e.g., \citeref{spekkens_contextuality_2005}, but we make sure not to presume an underlying notion of system. Prepare-and-measure scenarios are introduced in \cref{sec:operational primitices},
and our notion of indistinguishability is introduced in \cref{sec:introducing indistinguishability}: an important concept is that of the reference of indistinguishability, which describes the set of procedures that one uses in order to establish indistinguishability relations.
In \cref{sec:reference preorder}, given a prepare-and-measure scenario, we introduce a preorder between different choices of references: the preorder describes the case where a reference is a fine-graining of another one with respect to the indistinguishability relations that are seen at the level of the prepare-and-measure scenario.

\Cref{sec:relative noncontextuality} focuses on the notion of relative noncontextuality. 
In \cref{sec:definitions om}, we formulate at first a definition of an ontological model for a prepare-and-measure scenario, and then specialize to relative noncontextual ontological models in \cref{def:classical model}.
A relative noncontextual ontological model can be thought of as a generalized noncontextual ontological model in the original sense \cite{spekkens_contextuality_2005}, but with an explicit parametrization of the indistinguishability relations that one imposes at the ontological level. In that sense, formally speaking, a relative noncontextual ontological model is similar to the generalized noncontextual ontological model of \citeref{schmid_all_2018}.
The key difference is that while \citeref{schmid_all_2018} assumed a fixed set of operational equivalences as an input, without necessarily specifying how these should be obtained in practice, we parametrize this set of operational equivalences by a choice of reference procedures, which is quite natural and convenient for our purposes. Indeed, the reference procedures will later be related to the procedures that one considers as probing the system.
Then, \cref{sec:varying the reference} formulates some basic results that relate the reference preorder to the existence of relative noncontextual ontological models for different references.
This leads, in \cref{sec:relative nc graph}, to the notion of relative noncontextuality graph: given a prepare-and-measure scenario, one introduces a vertex in the graph for each choice of reference and indicates, for each vertex, whether there exists a relative noncontextual ontological model for the prepare-and-measure scenario with respect to the reference assigned to the vertex. 
The vertices of the graph are then ordered with respect to the reference preorder, which constrains the structure of the relative noncontextuality graph according to the results of \cref{sec:varying the reference}.
As shown in \cref{fig:typical graph}, the generic relative noncontextuality graph will display a behavior where noncontextuality emerges as one fine-grains the indistinguishability relations of the prepare-and-measure scenario.

In \cref{sec:choices of references}, we investigate possible approaches with respect to fixing the reference of indistinguishability that one uses in relative noncontextuality.
This is motivated by the fact that generalized noncontextuality in its standard form appears to be concerned with a fixed set of operational equivalences \cite{spekkens_contextuality_2005,schmid_all_2018,spekkens_ontological_2019}.
We first discuss the notion of in-principle indistinguishability \cite{spekkens_ontological_2019,selby_open-source_2022,catani_reply_2022} in \cref{sec:limits of in-principle indistinguishability}, and argue in favor of replacing it, in this context, with a notion of indistinguishability with respect to a set of procedures, or indistinguishability with respect to all procedures on a system: indeed, we argue that in-principle indistinguishability is always fundamentally of this form.
In \cref{sec:operational noncontextuality}, we describe the proposal of operational noncontextuality \cite{gitton_solvable_2022} that chooses the extent of the prepare-and-measure scenario itself to define the reference of indistinguishability, as represented in \cref{fig:venn}.
This proposal has the merit of being methodologically simple, but perhaps at the expense of being agnostic of more general considerations.
The proposal of \cref{sec:pragmatic noncontextuality} is that of pragmatic noncontextuality: a pragmatic system is a system defined by a set of procedures, and this system is argued to be more or less relevant based on pragmatic considerations (e.g., how interesting that system is, how it is related to interesting tasks, how fundamental it is believed to be, etc.).
In the approach of pragmatic noncontextuality, one then embeds the prepare-and-measure scenario into different pragmatic systems, as shown in \cref{fig:venn}, each time choosing the reference of indistinguishability to be used in noncontextuality as corresponding to the procedures that define the pragmatic system.
Verdicts of pragmatic contextuality are then weighted by the relevance of the choice of pragmatic system under consideration.
In \cref{sec:ontic-system noncontextuality}, we discuss the notion of ontic systems, which are conceptual entities whose operational content is in general unknown as suggested in \cref{fig:venn}. This results in the notion of ontic-system noncontextuality: there, given a prepare-and-measure scenario, one should have a prescription of the ontic system that the prepare-and-measure scenario is interacting with, and choose the true tomographically complete set of operations that probe the ontic system as the reference of indistinguishability: this is a possible reading of \citeref{pusey_contextuality_2019}.
Finally, in \cref{sec:local noncontextuality}, we discuss the limits of replacing pragmatic systems and ontic systems with locality considerations.
We conclude in \cref{sec:conclusion} with some open questions.

\newpage
\section{Prepare-and-measure scenarios and operational indistinguishability}
\label{sec:pm scenarios}

In this section, we introduce the operational framework on top of which we describe the type of prepare-and-measure scenarios that we shall consider. Furthermore, we introduce a notion of operational indistinguishability that will become crucial in \cref{sec:relative noncontextuality} when we define relative noncontextual ontological models.

\subsection{Operational framework}
\label{sec:operational primitices}

\paragraph{Operational procedures and events.} We consider an agent in a lab. The agent has access to a number of possible actions (``press the blue button'', ``turn the red knob to the right'', etc.).
The agent furthermore has access to a measurable space $\measspace$, where $\eventset$ is a set of atomic observation events (``the green light is flashing'', ``the Geiger counter ticked'', ``the coin landed tails'', etc.) and $\sigmaalg$ is a $\sigma$-algebra over $\eventset$, i.e., a collection of subsets of $\eventset$ that is closed under complement, and countable unions and intersections such that $\emptyset,\eventset\in\sigmaalg$. Here, $\emptyset$ is the empty set that describes the ``null event'' that never occurs, while $\eventset$ corresponds to the ``trivial event'' that always occurs. An event that can be observed by the agent is any $\omega \in \sigmaalg$. For instance, $\omega = \{\text{``the green light is flashing''}, \text{``the coin landed tails''}\}$ is the event describing that the green light is flashing or the coin landed tails (or both). 

\paragraph{Procedures, preparations and effects.} 
The agent can then generate what we refer to as operational procedures, typically denoted $\proc$, by chaining certain actions, conditioned on the observed events that happened in the past. The set of such procedures is denoted $\procset$. This tacitly requires the agent to be equipped with a trusted classical memory. In other words, a procedure is a finite algorithm of the form ``execute this action first, then, conditioned on observing this event, execute this second action, else execute this third action, etc.''.
Preparations, typically denoted $\prep \in \procset$ (where the letter choice comes from the analogy to ``states'')
are simply instances of such general operational procedures.
On the other hand, an effect, typically denoted $\effect$, consists of a procedure $\proc\in\procset$ together with an event $\event\in\sigmaalg$ that describes the condition of ``success'' of the effect. This is typically denoted $\effect = \event|\proc \in \effectset = \sigmaalg \times \procset$.
For instance, the effect $\effect$ could correspond to ``press the blue button and consider the effect a success if the red light is on afterwards'', which is denoted $\effect = \effectexpl$ where $\event\in\sigmaalg$ corresponds to ``the red light is on'' and $\proc\in\procset$ corresponds to ``press the blue button''.
In general, one can compose procedures together, but for our purposes we shall only look at composing effects after preparations. The result is also an effect: for instance, the effect $\effect|\prep$ 
is thought of as being successful if, given a decomposition $\effect = \effectexpl$, the agent sees the event $\event$ after carrying out the preparation $\prep$ followed by the procedure $\proc$.
We will occasionally consider effects that correspond to observing an event $\event\in\sigmaalg$ without any prior action: this could be denoted $\event|\text{``do nothing''} \in \effectset$, but we will simply denote this $\event \in \effectset$ by a slight abuse of notation.

\paragraph{Probabilities.}
For all $\effect\in\effectset$, for all $\prep\in\procset$, we assume that the agent has a prescribed probability $\probarg{\effect}{\prep} \in [0,1]$ that describes the trust that the agent has in the success of the procedure $\effect|\prep$. These probabilities may for instance come from an assumption that the agent can repeat the procedure $\effect|\prep$ in an independent manner many times, and the probability $\probarg{\effect}{\prep}$ is the best estimate known for the expected frequency of success of the procedure $\effect|\prep$, but this does not really matter for our purposes.
Still, this probability distribution must satisfy the following requirements: for all preparations $\prep \in \procset$,
\begin{itemize}
\item[---] \textbf{trivial event $\eventset$:} for all $\effect\in\effectset$ such that $\effect = \eventset|\proc$ for some $\proc\in\procset$, it holds that $\probarg{\effect}{\prep} = 1$;
\item[---] \textbf{additivity:} for all $\effect_1,\effect_2,\effect_3\in\effectset$ such that there exist $\proc\in\procset$ and $\event_1,\event_2 \in \sigmaalg$ satisfying
\begin{subequations}
\label{eq:additivity case}
\begin{align}
    \effect_1 &= \event_1 | \proc, \\
    \effect_2 &= \event_2 | \proc, \\
    \effect_3 &= \event_1\cup\event_2 |\proc, &\textup{\small($\cup$ denotes set union)}\\
    \emptyset &= \event_1\cap\event_2, &\textup{\small($\cap$ denotes set intersection)}
\end{align}
\end{subequations}
it holds that $\probarg{\effect_1}{\prep} + \probarg{\effect_2}{\prep} = \probarg{\effect_3}{\prep}$. In other words, if an effect is a coarse-graining of mutually incompatible effects, then the probabilities are additive.
\end{itemize}
These conditions together with the trivial identity $\eventset = \eventset \cup \emptyset$ imply in particular that if $\effect = \emptyset|\proc$ for some $\proc \in \procset$, then $\probarg{\effect}{\prep} = 0$ for all $\prep \in \procset$.

\paragraph{Classical mixtures.}

We additionally endow the agent with a trusted classical random number generator. This allows the agent to generate classical mixtures of preparations, referred to as preparation densities, and classical mixtures of effects, referred to as effect densities.
For instance, given a set of preparations $\sets \subseteq \procset$ and a set of effects $\sete \subseteq \effectset$, the set of preparation densities is denoted $\convhull\sets$ and the set of effect densities is denoted $\convhull\sete$. A preparation density $\prepdens \in \convhull\sets$ stands for the formal convex mixture $\prepdens = \sum_{\prep\in\sets} \prepdensarg\prep \prep$, 
where the real coefficients $\{\prepdensarg\prep\}_{\prep\in\sets}$ are such that $\prepdensarg\prep \geq 0$ and $\sum_{\prep\in\sets} \prepdensarg\prep = 1$. Similarly, an effect density $\effectdens \in \convhull\sete$ stands for the formal convex mixture $\effectdens = \sum_{\effect\in\sete} \effectdensarg\effect \effect$ where the real coefficients $\{\effectdensarg\effect\}_{\effect\in\sete}$ are such that $\effectdensarg\effect \geq 0$ and $\sum_{\effect\in\sete} \effectdensarg\effect = 1$. 
The resulting probabilities of success are then obtained in a convex-linear fashion:
\begin{equation}
    \probarg{\effectdens}{\prepdens} = \sum_{\effect\in\sete,\prep\in\sets} \effectdensarg\effect\prepdensarg\prep \probarg{\effect}{\prep},
\end{equation}
i.e., the function $\probarg{\cdot}{\cdot} : \effectset \times \procset \to [0,1]$ is convex-linear in both arguments.

\begin{figure}[ht]
\centering
$$
\drawdatatable{\drawexampletable}
$$
\caption{An instance of a data table. Each entry represents the probability $\probarg{\effect}{\prep} \in [0,1]$ of success that one attributes to the execution of the preparation $\prep$ associated to each column followed by the execution of the effect $\effect$ associated to each row. The effects $\emptyset$ and $\eventset$ correspond to the null and trivial events, respectively.}
\label{fig:example datatable}
\end{figure}

\paragraph{Data tables.}

It will be convenient to represent the probabilities associated to a set of preparations and effects in the form of a data table. A typical data table is shown in \cref{fig:example datatable}.
This specific data table is rather idealized since it contains $1$'s and $0$'s for non-trivial effects. More generally, a data table can be denoted abstractly as a matrix $\{\probarg{\effect}{\prep}\}_{\effect\in \sete,\prep\in \sets}$. The sets $\sets\subseteq\procset$ of preparations and $\sete\subseteq\effectset$ of effects are sets of interests.
Importantly, we do not really need nor want to assume that a data table is ever ``complete'': in principle, there are always additional measurements one could add in the rows and preparations in the columns (which may or may not be redundant). For our purposes, what is important is to remind ourselves that every data table could in principle be extended, but that we nonetheless restrict our attention to a particular finite data table.
We shall always assume that any preparation $\prep$ or effect $\effect$ that we mention is such that we know the associated probability $\probarg{\effect}{\prep}$.

\paragraph{Prepare-and-measure scenarios.}

In the rest of this work, we will use this definition:
\begin{definition}
\label{def:pm scenario}
A prepare-and-measure scenario $\apm$ consists of a finite set of preparations $\asets \subset \procset$ and a finite set of effects $\asete \subset \effectset$ such that the null event and trivial event belong to $\asete$, i.e., $\emptyset,\eventset \in \asete$.
\end{definition}
In principle, one could allow for infinite sets $\asets$ and $\asete$, but this is not very important for our purposes.
As we will later clarify, given a prepare-and-measure scenario $\apm$, one is really interested in preparation densities belonging to $\convhull\asets$ and effect densities belonging to $\convhull\asete$.
Typically, the prepare-and-measure scenario is a subset of a more general set of preparations and effects that one has performed. 
Thus, typically, a prepare-and-measure scenario's data table is a subset of the data table that one has investigated. We can represent this graphically: for instance, we can pick $\asets = \{\prep_i\}_{i=1}^3$ and $\asete = \{\emptyset,\eventset\} \cup \{\effect_i\}_{i=1}^2$ in the example of \cref{fig:example datatable}, which we represent as in \cref{fig:example pm scenario}.
\begin{figure}[ht]
\centering
$$
\drawdatatable{
\drawexamplepmscenario
}
$$
\caption{An example prepare-and-measure scenario: the set of preparations $\asets$ (set of effects $\asete$) belonging to the prepare-and-measure scenario is a subset of the columns (rows), and we shade the relevant probabilities.}
\label{fig:example pm scenario}
\end{figure}

\subsection{Indistinguishability}
\label{sec:introducing indistinguishability}

Noncontextual ontological models are based on relations of operational equivalence, or equivalently, of indistinguishability. We make these notions precise here.
Let us define what a reference of indistinguishability, or, for brevity, a reference, is:
\begin{definition}[Reference of indistinguishability]
\label{def:ref}
A reference (of indistinguishability) is any pair $\rpm$ consisting of a set of preparations $\rsets \subseteq \procset$, the reference preparations, and a set of effects $\rsete \subseteq \effectset$, the reference effects.
\end{definition}
 Let $\apm$ be a prepare-and-measure scenario. 
The role of the reference effects in $\rsete$ is to define an equivalence relation of indistinguishability on preparation densities in $\convhull\asets$: we say that two preparation densities $\prepdens,\prepdens'\in\convhull\asets$ are indistinguishable with respect to $\rsete$, denoted $\prepdens \indist{\rsete} \prepdens'$, if and only if it holds that
\begin{equation}
\forall \effect\in\rsete\st \probarg{\effect}{\prepdens} = \probarg{\effect}{\prepdens'}.
\end{equation}
The binary relation $\indist{\rsete}$ is an equivalence relation, since it is clearly reflexive, symmetric and transitive. 
We repeat these considerations but exchanging the role of preparations and effects: two effect densities $\effectdens,\effectdens' \in \convhull\asete$ are indistinguishable with respect to $\rsets$, denoted $\effectdens \indist{\rsets} \effectdens'$, if and only if it holds that
\begin{equation}
\forall \prep \in \rsets \st \probarg{\effectdens}{\prep} = \probarg{\effectdens'}{\prep}.
\end{equation}
\begin{figure}[!t]
\centering
\begin{subfigure}{0.8\textwidth}
\centering
$$
\drawdatatable{
\drawindistdatatable
%
\shaderegion{red}{(1,1)}{(4,5)}
\draw[bracestyle,red] (4.25,-1.5) -- node[left,black] {$\rsete^{(1)}$} (0.75,-1.5);
%
\begin{scope}[line width=1pt,black]
\draw (0.6,2.75) rectangle (4.4,3.25);
\draw (0.6,4.25) rectangle (4.4,4.75);
\node[black] (equive) at (6.5,3.75) {$\prep_3\indist{\rsete^{(1)}}\frac12\prep_1+\frac12\prep_2$};
\draw[arrows={->}] (4.4,3.25) -- (equive);
\draw[arrows={->}] (4.4,4.25) -- (equive);
\end{scope}
}
$$
\caption{An example indistinguishability relation.}
\label{fig:example indist 1}
\end{subfigure}
\begin{subfigure}{0.8\textwidth}
\centering
$$
\drawdatatable{
\drawindistdatatable
%
\shaderegion{red}{(1,1)}{(5,5)}
\draw[bracestyle,red] (5.25,-1.5) -- node[left,black] {$\rsete^{(2)}$} (0.75,-1.5);
\shaderegion{blue}{(1.1,1.1)}{(3.9,4.9)}
%
\begin{scope}[line width=1pt,black]
\draw (0.6,2.75) rectangle (5.4,3.25);
\draw (0.6,4.25) rectangle (5.4,4.75);
\node[black] (equive) at (7,3.75) {$\prep_3\nindist{\rsete^{(2)}}\frac12\prep_1+\frac12\prep_2$};
\draw[arrows={->}] (5.4,3) -- (equive);
\draw[arrows={->}] (5.4,4.5) -- (equive);
\end{scope}
}
$$
\caption{The indistinguishability relation of \cref{fig:example indist 1} no longer holds because of the larger reference $\rsete^{(2)}$.}
\label{fig:example indist 2}
\end{subfigure}
\caption{Indistinguishability relations. Visually, looking for a preparation indistinguishability relation in a data table amounts to looking for two convex combinations of columns such that the components \emph{within the subset of rows corresponding to the reference $\rsete$} agree. Looking for an effect indistinguishability relation is analogous, but exchanging the roles of rows and columns.}
\label{fig:example indist}
\end{figure}

\paragraph{Example.}

Consider the data table of \cref{fig:example datatable}. We choose the prepare-and-measure scenario $\asets = \{\prep_i\}_{i=1}^3$ and $\asete = \{\emptyset,\eventset\}\cup \{\effect_i\}_{i=1}^2$, and the reference sets $\rsets = \asets$ and $\rsete^{(1)} = \asete$. This yields the indistinguishability relation
\begin{equation}
\label{eq:example indist 1}
\prep_3 \indist{\rsete^{(1)}} \frac12 \prep_1 + \frac12 \prep_2.
\end{equation}
On the other hand, choosing $\rsete^{(2)} = \{\emptyset,\eventset\}\cup\{\effect_i\}_{i=1}^3$ lifts this indistinguishability relation:
\begin{equation}
\label{eq:example indist 2}
\prep_3 \nindist{\rsete^{(2)}} \frac12 \prep_1 + \frac12 \prep_2,
\end{equation}
since these two preparation densities yield different probabilities for the effect $\effect_3\in\rsete^{(2)}$.
These considerations are summarized in \cref{fig:example indist}: the setting of \cref{eq:example indist 1} is represented in \cref{fig:example indist 1},
while the setting of \cref{eq:example indist 2} is represented in \cref{fig:example indist 2}.

\subsection{Reference preorder and faithful references}
\label{sec:reference preorder}

Let us define the preorder relation $\rpm\refordereq \rpmm$ that intuitively describes that $\rpmm$ is a ``better or equivalent'' reference than $\rpm$ with respect to distinguishing the procedures of the prepare-and-measure scenario $\apm$.
In the following, a preorder $\preceq$ is a relation on a set $X$ such that $x \preceq x$ for all $x \in X$ (reflexivity) and such that $x \preceq x'$ and $x'\preceq x''$ implies $x\preceq x''$ (transitivity).\footnote{In contract, a partial order is such that $x \preceq x'$ and $x' \preceq x$ implies $x = x'$: in our case, there can be different choices of references that induce the same indistinguishability relations, so we do not have a partial order to start with.}

\newpage
\begin{definition}[Reference preorder]
\label{def:ref preorder}
Let $\apm$ be a prepare-and-measure scenario, and let $\rpm$ and $\rpmm$ be two choices of references. We define the preorder $\refordereq$ such that
\begin{equation}
\rpm \refordereq \rpmm
\end{equation}
if and only if, for all $\prepdens,\prepdens'\in\convhull\asets$, for all $\effectdens,\effectdens'\in\convhull\asete$, the following implications hold:
\begin{subequations}
\label{eq:preorder def}
\begin{align}
    \prepdens \indist{\rsete'} \prepdens' &\limplies \prepdens \indist{\rsete} \prepdens', \\
    \effectdens \indist{\rsets'} \effectdens' &\limplies \effectdens \indist{\rsets} \effectdens'.
\end{align}
\end{subequations}
\end{definition}

Not all choices of references $\rpm$ are satisfactory for our later endeavors. 
A faithful reference is one that allows to better (or equally well) distinguish
the preparations and effects of a prepare-and-measure scenario as the prepare-and-measure scenario itself does. 
This can be guaranteed by ensuring that $\asets \subseteq \rsets$ and $\asete \subseteq \rsete$, but this condition is not necessary.
Indeed, it could be that subsets $\rsets \subseteq \asets$ and $\rsete\subseteq\asete$ still give rise to a faithful reference $\rpm$ for $\apm$: this could be the case if, in some loose sense, $\rpm$ is a ``tomographically complete'' subset of $\apm$.
The criterion that will become relevant is the following:%
\begin{definition}[Faithful reference]
\label{def:faithful reference}
Let $\apm$ be a prepare-and-measure scenario. A reference $\rpm$ is a faithful reference with respect to $\apm$ if and only if $\apm \refordereq \rpm$ with respect to the preorder of \cref{def:ref preorder}.
\end{definition}
As we will later point out, choosing an unfaithful reference for a prepare-and-measure scenario forbids the existence of a relative noncontextual ontological model.

\newpage
\section{Relative noncontextuality }
\label{sec:relative noncontextuality}

In this section, building upon the operational formalism of \cref{sec:pm scenarios}, we will introduce relative noncontextual ontological models. These should be thought of as being a conceptual generalization of the noncontextual ontological models of \citeref{spekkens_contextuality_2005} in the sense that they allow to explicitly parametrize the indistinguishability relations that one imposes at the level of the ontological model through the choice of a reference $\rpm$.
In \cref{sec:choices of references}, we will attempt to describe how existing notions of noncontextuality can be recovered from relative noncontextuality together with a prescription of how to fix the reference $\rpm$.

\subsection{Definitions}
\label{sec:definitions om}

\paragraph{Ontological models.} 

Let us first introduce ontological models. An ontological model \cite{hardy_quantum_2004,spekkens_contextuality_2005} for a prepare-and-measure scenario $\apm$ (\cref{def:pm scenario}) is a causal model that mediates the influence of the preparation to the success of the effect through a classical hidden variable. More specifically, it consists of:
\begin{myitem}
\item a finite\footnote{In some contexts \cite{schmid_all_2018,gitton_solvable_2022}, noncontextuality assumptions can be used to explicitly prove that one can assume $\ospace$ to be finite without loss of generality compared to the generic infinite case. In this work, we assume a finite ontic set $\ospace$ for simplicity.} set $\ospace$;
\item a probability distribution $\ostate{\prepdens}$ over $\ospace$ for each preparation density $\prepdens \in \convhull\asets$;
\item an effect $\orep{\effectdens}$ over $\ospace$ for each effect density $\effectdens \in \convhull\asete$.
\end{myitem}
We refer to $\ospace$ as an ontic space, to $\ostate\prepdens$ as an ontic state distribution, and to $\orep\effectdens$ as an ontic response function. 
The intuition of such a model is that upon implementing the preparation density $\prepdens \in \convhull\asets$, one really prepares ``something'' in the state $\hv \in \ospace$ with probability $\ostate{\prepdens}(\hv)$,
and if a given state $\hv\in\ospace$ was ``really prepared'', then one would observe the success of the effect density $\effectdens\in\convhull\asete$ with probability $\orep{\effectdens}(\hv)$ --- one thus expects that $\probarg{\effectdens}{\prepdens} = \sum_{\hv\in\ospace} \orep{\effectdens}(\hv) \ostate{\prepdens}(\hv)$, on average.
The consistency conditions that the ontological model must satisfy are made explicit in \cref{def:ontological model}:
\begin{definition}[Ontological model]
\label{def:ontological model}
Consider a prepare-and-measure scenario $\apm$. 
An ontological model for $\apm$ is a triplet 
\begin{equation}
\omodel
\end{equation}
such that $\ospace$ is a finite set, and for all $\hv\in\ospace$, for all $\prepdens,\prepdens' \in \convhull\asets$, for all $\effectdens,\effectdens' \in \convhull\asete$, the following holds:
\begin{myitem}
\item the ontic state distributions are well-defined: 
\begin{equation}
\label{eq:ontic state well-defined}
    \ostate{\prepdens}(\hv) \geq 0 \quad \textup{and} \quad \textstyle\sum_{\hv\in\ospace} \ostate{\prepdens}(\hv) = 1;
\end{equation}
\item the ontic response functions are well-defined:
\begin{equation}
\label{eq:ontic response function well-defined}
0 \leq \orep{\effectdens}(\lambda) \leq 1;
\end{equation}
\item the ontic response function implements the trivial event trivially: for all $\effect\in\asete$ such that $\effect = \eventset|\proc$ for some $\proc\in\procset$, it holds that
\begin{equation}
\label{eq:trivial response function}
    \orep{\effect}(\lambda) = 1;
\end{equation}
\item the ontic response function is additive with respect to coarse-grainings of mutually incompatible observations: for all $\effect_0\in\asete$, $\{\effect_i\in\asete\}_{i=1}^n$ such that there exist $\proc \in\procset$, $\{\event_i\in\sigmaalg\}_{i=1}^n$ satisfying
\begin{subequations}
\label{eq:gen add}
\begin{align}
    \effect_0 &= \cup_{i=1}^n \event_i | \proc, \\
    \forall i \in \{1,\dots,n\}\st \effect_i &= \event_i | \proc, \\
    \forall i \neq j \in \{1,\dots,n\}\st \emptyset &= \event_i \cap \event_j,
\end{align}
\end{subequations}
it holds that
\begin{equation}
\label{eq:ontic response function additivity}
\orep{\effect_0}(\hv) = \textstyle\sum_{i=1}^n \orep{\effect_i}(\hv);
\end{equation}
\item the ontological model implements classical mixtures as ontic mixtures, i.e., $\ostate{}$ is convex-linear on $\convhull\asets$ and $\orep{}$ is convex-linear on $\convhull\asete$, which explicitly reads for $\prepdens = \sum_{\prep\in\asets}\prepdensarg{\prep}\prep$ and $\effectdens = \sum_{\effect\in\asete}\effectdensarg{\effect}\effect$:
\begin{subequations}
\label{eq:convex-linear}
\begin{align}
    \ostate{\prepdens}(\hv) &= \textstyle\sum_{\prep\in\asets} \prepdensarg\prep \ostate{\prep}(\hv), \\
    \orep{\effectdens}(\hv) &= \textstyle\sum_{\effect\in\asete} \effectdensarg\effect \orep{\effect}(\hv);
\end{align}
\end{subequations}
\item the ontological model reproduces the expected statistics:
\begin{equation}
\label{eq:ontic statistics}
    \probarg{\effectdens}{\prepdens} = \textstyle\sum_{\hv\in\ospace} \orep{\effectdens}(\hv)\ostate{\prepdens}(\hv).
\end{equation}
\end{myitem}
\end{definition}
In principle, one could relax this definition to allow for sub-normalized ontic state distributions, i.e., allow for $\sum_{\hv\in\ospace}\ostate{\prepdens}(\hv) \leq 1$, as is discussed in, e.g., \citeref{selby_accessible_2021}, to capture that a preparation procedure can fail to produce the desired result.
Conceptually, we do not feel compelled to embrace this relaxation, since ``no system produced'' ought to correspond to an ontic state of reality where there is ``truly nothing'' (for instance, the ontological model of \citeref{catani_why_2021} includes an ontic state that correspond to a null occupation number of a photonic mode in the arm of an interferometer) but this is of course a possible generalization of our work.

Importantly, there always exists an ontological model for any prepare-and-measure scenario (it is easy to explicitly construct one where the ontic space $\ospace$ is identified with the set of preparations $\asets$ --- this is for instance demonstrated in the proof of \cref{lem:perfect state distinguishability}).
In order for the existence of an ontological model to be non-trivial and potentially useful as a classicality criterion, it is thus necessary to add further constraints.

\paragraph{Relative noncontextuality.}

A relative noncontextual ontological model for a prepare-and-measure scenario $\apm$ with respect to a reference $\rpm$  is an ontological model for $\apm$ that satisfies additional constraints. These are that indistinguishability relations induced by the reference $\rpm$ must be reflected exactly in the ontological model, in the sense that two preparation densities (effect densities) that yield the same statistics as far as $\rsete$ ($\rsets$) is concerned should be represented by the same ontic state distribution (ontic response function) in the ontological model. Formally, using the notation of \cref{sec:introducing indistinguishability} for indistinguishability relations:
\begin{definition}[Relative noncontextual ontological model]
\label{def:classical model}
Let $\apm$ be a prepare-and-measure scenario (\cref{def:pm scenario}) and $\rpm$ be a reference (\cref{def:ref}).
A relative noncontextual ontological model for $\apm$ with respect to the reference $\rpm$ is an ontological model
\begin{equation}
\omodel
\end{equation}
for $\apm$ such that, for all $\prepdens,\prepdens' \in \convhull\asets$, for all $\effectdens,\effectdens' \in \convhull\asete$, the following noncontextuality relations hold:
\begin{subequations}
\label{eq:relative noncontextuality}
\begin{align}
\prepdens \indist{\rsete} \prepdens' &\limplies \ostate{\prepdens} = \ostate{\prepdens'}, \label{eq:preparation relative noncontextuality} \\
\effectdens \indist{\rsets} \effectdens' &\limplies \orep{\effectdens} = \orep{\effectdens'}. \label{eq:effect relative noncontextuality}
\end{align}
\end{subequations}
\end{definition}
This definition allows us to say that a prepare-and-measure scenario is noncontextual relative to a certain reference if it admits a relative noncontextual ontological model with respect to this reference.
Thus, relative noncontextuality is not an absolute property of a prepare-and-measure scenario, but rather a joint property of a prepare-and-measure scenario and a reference.
We will return to the relation between relative noncontextuality and conventional notions of noncontextuality in \cref{sec:choices of references}.

\subsection{Varying the reference}
\label{sec:varying the reference}
 
We now formulate basic results that clarify the consequences of choosing different reference procedures $\rpm$. These are proven in \cref{app:varying the reference}.
First off, in order to allow for the possibility of a relative noncontextual ontological model, it is necessary for the reference to be a faithful reference, i.e., to be a reference that does not coarse-grain the indistinguishability relations induced by the prepare-and-measure scenario itself. 
\begin{restatable}{lemma}{LemNeedFaithful}
\label{lem:need faithful}
Suppose that a prepare-and-measure scenario $\apm$ admits a relative noncontextual ontological model with respect to a reference $\rpm$. Then, the reference $\rpm$ is faithful with respect to $\apm$ in the sense of \cref{def:faithful reference}.
\end{restatable}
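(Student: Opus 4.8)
The plan is to unpack what \cref{def:faithful reference} demands and then verify it directly from the two defining properties of a relative noncontextual ontological model, namely the noncontextuality relations of \cref{eq:relative noncontextuality} and the statistics-reproduction identity \cref{eq:ontic statistics}. By \cref{def:faithful reference} combined with \cref{def:ref preorder}, showing that $\rpm$ is faithful with respect to $\apm$ amounts to showing $\apm \refordereq \rpm$, i.e., that for all $\prepdens,\prepdens'\in\convhull\asets$ and all $\effectdens,\effectdens'\in\convhull\asete$ one has $\prepdens \indist{\rsete} \prepdens' \limplies \prepdens \indist{\asete} \prepdens'$ and $\effectdens \indist{\rsets} \effectdens' \limplies \effectdens \indist{\asets} \effectdens'$.

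First I would fix a relative noncontextual ontological model $\omodel$ for $\apm$ with respect to $\rpm$, as supplied by the hypothesis. For the preparation implication, assume $\prepdens \indist{\rsete} \prepdens'$; then \cref{eq:preparation relative noncontextuality} gives $\ostate{\prepdens} = \ostate{\prepdens'}$. For any effect $\effect\in\asete$ (viewed as a degenerate element of $\convhull\asete$), applying \cref{eq:ontic statistics} to both $\prepdens$ and $\prepdens'$ yields $\probarg{\effect}{\prepdens} = \sum_{\hv\in\ospace}\orep{\effect}(\hv)\ostate{\prepdens}(\hv) = \sum_{\hv\in\ospace}\orep{\effect}(\hv)\ostate{\prepdens'}(\hv) = \probarg{\effect}{\prepdens'}$. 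Since $\effect\in\asete$ was arbitrary, this is exactly $\prepdens \indist{\asete} \prepdens'$. The effect implication follows by the mirror-image argument: if $\effectdens \indist{\rsets} \effectdens'$, then \cref{eq:effect relative noncontextuality} gives $\orep{\effectdens} = \orep{\effectdens'}$, and feeding this into \cref{eq:ontic statistics} for each $\prep\in\asets$ gives $\probarg{\effectdens}{\prep} = \probarg{\effectdens'}{\prep}$, hence $\effectdens \indist{\asets} \effectdens'$. This establishes $\apm \refordereq \rpm$.

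I do not expect a genuine obstacle: the argument is essentially a one-line consequence of chaining the noncontextuality relations with statistics reproduction. The only point deserving a moment's care is the quantifier bookkeeping — both \cref{eq:relative noncontextuality} and \cref{eq:ontic statistics} are stated for densities ranging over the full convex hulls $\convhull\asets$ and $\convhull\asete$, which is precisely the range over which the reference preorder of \cref{def:ref preorder} must be checked, so no separate convexity or extension-by-linearity step is needed.

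It may also be worth presenting the statement contrapositively as intuition: any preparation indistinguishability relation seen by $\rsete$ but not by $\asete$ would, via \cref{eq:preparation relative noncontextuality}, force two operationally distinguishable preparation densities to receive the same ontic state distribution, contradicting \cref{eq:ontic statistics}; likewise for effects. This is the precise sense in which an unfaithful reference coarse-grains the prepare-and-measure scenario's own indistinguishability relations and thereby precludes a relative noncontextual model.
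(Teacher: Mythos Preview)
Your proof is correct and follows essentially the same route as the paper: chain the noncontextuality relations \cref{eq:relative noncontextuality} with the statistics-reproduction identity \cref{eq:ontic statistics} to deduce $\prepdens \indist{\rsete} \prepdens' \limplies \prepdens \indist{\asete} \prepdens'$ and the analogous effect implication, which is exactly $\apm \refordereq \rpm$. Your explicit quantifier bookkeeping and contrapositive remark are nice additions but do not alter the underlying argument.
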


Now, with respect to the preorder introduced in \cref{def:ref preorder}, we have that if\linebreak$\rpm \refordereq \rpmm$, then $\rpmm$ is a reference that allows to distinguish the operational preparations and effects of a prepare-and-measure scenario at least as well as $\rpm$ does, and as a result, constructing a relative noncontextual ontological model for a prepare-and-measure scenario with $\rpmm$ is at least as easy as it with $\rpm$.
\begin{restatable}{lemma}{LemInclusionRelations}
\label{lem:inclusion relations}
Consider a prepare-and-measure $\apm$ and two choices of references $\rpm$ and $\rpmm$. Then, if 
\begin{equation}
\label{eq:inclusion condition}
\rpm \refordereq \rpmm,
\end{equation}
then the existence of a relative noncontextual ontological model for $\apm$ with respect to the reference $\rpm$ implies the existence of a relative noncontextual ontological model for $\apm$ with respect to the reference $\rpmm$.
\end{restatable}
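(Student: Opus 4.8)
The plan is to show that any relative noncontextual ontological model for $\apm$ with respect to $\rpm$ is \emph{already} a relative noncontextual ontological model with respect to $\rpmm$, without modifying it at all. The point is that all the conditions of \cref{def:ontological model} depend only on $\apm$ and not on the reference, so the only thing to check is that the noncontextuality relations \eqref{eq:relative noncontextuality} for $\rpmm$ follow from those for $\rpm$, given the preorder hypothesis $\rpm \refordereq \rpmm$.

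First I would fix an ontological model $\omodel$ for $\apm$ satisfying \eqref{eq:preparation relative noncontextuality} and \eqref{eq:effect relative noncontextuality} with respect to $\rpm$. Then I would take arbitrary $\prepdens,\prepdens'\in\convhull\asets$ with $\prepdens \indist{\rsete'}\prepdens'$, and observe that by the definition of the reference preorder (\cref{def:ref preorder}, the first implication in \eqref{eq:preorder def}) this gives $\prepdens \indist{\rsete}\prepdens'$; then the assumed preparation noncontextuality of the model with respect to $\rsete$ yields $\ostate{\prepdens}=\ostate{\prepdens'}$, which is exactly the preparation noncontextuality relation \eqref{eq:preparation relative noncontextuality} for $\rsete'$. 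The effect case is the mirror image: for $\effectdens,\effectdens'\in\convhull\asete$ with $\effectdens \indist{\rsets'}\effectdens'$, the second implication of \eqref{eq:preorder def} gives $\effectdens \indist{\rsets}\effectdens'$, and effect noncontextuality of the model with respect to $\rsets$ gives $\orep{\effectdens}=\orep{\effectdens'}$, establishing \eqref{eq:effect relative noncontextuality} for $\rsets'$. Since the underlying ontological model conditions are untouched, this shows $\omodel$ is a relative noncontextual ontological model for $\apm$ with respect to $\rpmm$.

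Honestly, there is no serious obstacle here: the statement is essentially an unwinding of definitions, and the ``main step'' is just recognizing that the reference preorder was defined precisely so that this chaining of implications goes through. The one thing worth stating carefully is that the preorder condition \eqref{eq:preorder def} quantifies over \emph{all} densities in $\convhull\asets$ and $\convhull\asete$, which is exactly the same range over which the noncontextuality relations \eqref{eq:relative noncontextuality} are required to hold, so there is no mismatch of quantifiers to worry about. I would keep the proof to a few lines and emphasize that it is the \emph{same} model that works for both references.
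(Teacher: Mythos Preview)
Your proposal is correct and matches the paper's own proof essentially line for line: the paper also reuses the same ontological model and chains the preorder implication \eqref{eq:preorder def} with the noncontextuality condition \eqref{eq:relative noncontextuality} for $\rpm$ to obtain it for $\rpmm$, handling the effect case analogously.
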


An extreme case happens if the reference $\rpm$ is ``so good'' that there does not exist any non-trivial indistinguishability relation. In that case, constructing a relative noncontextual ontological model becomes trivial. This is for instance the case if, for all $\prep\in\asets$, there exists $\effect_\prep\in\rsete$ such that
\begin{equation}
    \forall \prep,\prep' \in \asets \st \probarg{\effect_\prep}{\prep'} = \left\{\begin{aligned}
    1 \textup{ if } \prep = \prep', \\
    0 \textup{ else.}
    \end{aligned}
    \right.
\end{equation}
Visually, this corresponds to the existence of a subset of rows of the data table inducing a submatrix that looks like an identity block --- see for instance \cref{fig:example perfect distinguishability}.
\begin{figure}[ht]
    \centering
    $$
    \drawdatatable{
    \foreach \j in {1,2,3} {
        \node at (0,\j) {$\prep_\j$};
        \node at (\j+2,0) {$\effect_{\prep_\j}$};
    }
    \foreach \j in {2}
        \node at (\j,0) {$\effect_n$};
    \node at (1,0) {$\vdots$};
    \foreach \i/\j/\p in 
    {1/1/\vdots,1/2/\vdots,1/3/\vdots,2/1/0,2/2/\frac13,2/3/1}
        \node at (\i,\j) {$\p$};
    \foreach \i in {1,2,3} {
    \foreach \j in {1,2,3} {
        \ifnum\i=\j
            \node at (\i+2,\j) {$1$};
        \else
            \node at (\i+2,\j) {$0$};
        \fi
    }}
    \drawgrid{5}{3}
    %
\shaderegion{red}{(1,1)}{(5,3)}
\draw[bracestyle,red] (5.25,-1.5) -- node[left,black] {$\rsete$} (0.75,-1.5);
\draw[bracestyle,red] (-1.5,0.75) -- node[above,black] {$\rsets$} (-1.5,3.25);
\shaderegion{blue}{(1.25,1.25)}{(1.75,2.75)}
\draw[bracestyle,blue] (2.25,-0.5) -- node[left,black] {$\asete$} (0.75,-0.5);
\draw[bracestyle,blue] (-0.5,0.75) -- node[above,black] {$\asets$} (-0.5,3.25);
\draw[line width=1pt,black] (2.6,0.8) rectangle (5.4,3.2);
\node (text) at (4,6.5) {perfect distinguishability};
\draw[line width=1pt,black,arrows={->}] (4,3.2) -- (text);
    }
    $$
    \caption{An example prepare-and-measure scenario and reference on a data table such that the assumptions of \cref{lem:perfect state distinguishability} are met, i.e., the reference perfectly distinguishes all the preparations of the prepare-and-measure scenario.}
    \label{fig:example perfect distinguishability}
\end{figure}
In that case, we have that if some $\prepdens,\prepdens'\in\convhull\asets$ satisfy $\prepdens \indist{\rsete} \prepdens'$, then in particular, for all $\prep\in\asets$, it holds that
\begin{equation}
    \probarg{\effect_\prep}{\prepdens} = \probarg{\effect_\prep}{\prepdens'},
\end{equation}
which is equivalent to: for all $\prep\in\asets$,
\begin{equation}
    \prepdensarg{\prep} = \prob_{\prepdens'}(\prep),
\end{equation}
or in other words, $\prepdens = \prepdens'$.
In general, we can write the following lemma:
\begin{restatable}{lemma}{LemPerfectStateDistinguishability}
\label{lem:perfect state distinguishability}
Let $\apm$ be a prepare-and-measure scenario, and let $\rpm$ be a faithful reference that is such that for all $\prepdens,\prepdens'\in\convhull\asets$,
\begin{equation}
\label{eq:perfect preparation distinguishability condition}
    \prepdens \indist{\rsete} \prepdens' \limplies \prepdens = \prepdens'.
\end{equation}
Then, $\apm$ admits a relative noncontextual ontological model with respect to the reference $\rpm$.
\end{restatable}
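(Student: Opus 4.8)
The plan is to write down the ``tautological'' ontological model whose ontic space is the set of preparations itself, and then to observe that the perfect-distinguishability hypothesis disposes of the preparation noncontextuality constraints while faithfulness of $\rpm$ disposes of the effect noncontextuality constraints. Concretely, I would take $\ospace = \asets$; for a single preparation $\prep\in\asets$ I would let $\ostate{\prep}$ be the point mass at $\prep$, and extend this to a density $\prepdens = \sum_{\prep\in\asets}\prepdensarg\prep\prep \in \convhull\asets$ by setting $\ostate{\prepdens}(\prep) = \prepdensarg\prep$; for an effect density $\effectdens\in\convhull\asete$ I would let $\orep{\effectdens}(\prep) = \probarg{\effectdens}{\prep}$ for each $\prep\in\asets$. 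The intuition is that the ontic state merely records which element of $\asets$ was drawn in the (formal) convex mixture, and the response function just reports the statistics that are assumed known already.

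The first task is to check that this triplet satisfies all clauses of \cref{def:ontological model}. Positivity and normalization of $\ostate{\prepdens}$ are immediate from the definition of a preparation density, and $0\le\orep{\effectdens}(\prep)=\probarg{\effectdens}{\prep}\le 1$ because probabilities lie in $[0,1]$. The trivial-event clause \eqref{eq:trivial response function} and the coarse-graining additivity \eqref{eq:ontic response function additivity} are inherited verbatim from the matching properties of $\probarg{\cdot}{\cdot}$; the only small point is that the probability assignment is only postulated to be additive for a coarse-graining of two mutually incompatible effects, so one first promotes this to $n$ mutually incompatible effects by a routine induction carried out inside $\effectset$, peeling off one event at a time from the pairwise-disjoint union. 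Convex-linearity \eqref{eq:convex-linear} is automatic: since $\ostate{\prep}$ is the point mass at $\prep$ and $\orep{\effect}(\prep')=\probarg{\effect}{\prep'}$, the two identities in \eqref{eq:convex-linear} are, respectively, the defining formula for $\ostate{\prepdens}$ and the convex-linearity of $\probarg{\cdot}{\cdot}$ in its effect argument. Finally, \eqref{eq:ontic statistics} holds because $\sum_{\prep\in\asets}\orep{\effectdens}(\prep)\ostate{\prepdens}(\prep) = \sum_{\prep\in\asets}\probarg{\effectdens}{\prep}\prepdensarg\prep = \probarg{\effectdens}{\prepdens}$ by convex-linearity of $\probarg{\cdot}{\cdot}$ in its preparation argument.

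The second task is to verify the relative noncontextuality relations \eqref{eq:relative noncontextuality}. On the preparation side, \eqref{eq:preparation relative noncontextuality} follows directly from the hypothesis \eqref{eq:perfect preparation distinguishability condition}: if $\prepdens\indist{\rsete}\prepdens'$, then $\prepdens=\prepdens'$ as elements of $\convhull\asets$, hence trivially $\ostate{\prepdens}=\ostate{\prepdens'}$. On the effect side, \eqref{eq:effect relative noncontextuality} is exactly where faithfulness is used: $\apm\refordereq\rpm$ means, by \cref{def:ref preorder}, that $\effectdens\indist{\rsets}\effectdens'$ implies $\effectdens\indist{\asets}\effectdens'$, i.e., $\probarg{\effectdens}{\prep}=\probarg{\effectdens'}{\prep}$ for every $\prep\in\asets=\ospace$, which is literally the statement $\orep{\effectdens}=\orep{\effectdens'}$.

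I do not expect a genuine obstacle here: once the model is written down, everything reduces to bookkeeping. The two load-bearing steps are the ones in the previous paragraph --- the perfect-distinguishability assumption on the preparation side and faithfulness of $\rpm$ (which \cref{lem:need faithful} shows is in any case necessary) on the effect side --- and the only clause calling for a moment's care is the passage from two-term to $n$-term additivity of the response functions.
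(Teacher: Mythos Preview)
Your proposal is correct and follows essentially the same route as the paper: both construct the tautological model with $\ospace=\asets$, $\ostate{\prepdens}(\prep)=\prepdensarg\prep$, $\orep{\effectdens}(\prep)=\probarg{\effectdens}{\prep}$, and then invoke perfect distinguishability for \eqref{eq:preparation relative noncontextuality} and faithfulness for \eqref{eq:effect relative noncontextuality}. You are slightly more careful than the paper in flagging the induction from two-term to $n$-term additivity, which the paper simply absorbs into the phrase ``follows directly by the analogous properties of the probabilities''.
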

The proof is also given in \cref{app:varying the reference}.

\begin{figure}[ht]
    \include{typical_graph}
    \caption{A typical relative noncontextuality graph. Given a fixed prepare-and-measure scenario $\apm$, each vertex of the graph is associated with a reference $\rpm$, and the vertex style represents the answer to the question ``does $\apm$ admit a relative noncontextual ontological model with respect to the reference $\rpm$?'': a ``yes'' is depicted as 
    $\centertikz{\protect\node[clnode] {};}$, 
    and a ``no'' is depicted as $\centertikz{\protect\node[nclnode] {};}$. A directed edge from a reference $\rpm$ to another reference $\rpmm$ indicates the preorder relation $\rpm\refordereq\rpmm$ --- for readability, we omit the directed edge from a vertex to itself, and we also omit the directed edges that can be obtained from the transitivity of the preorder relation.
    Because of \cref{lem:need faithful}, we omit any choice of reference that is not faithful, which implies that for any drawn reference $\rpm$, the directed edge from $\apm$ to $\rpm$ exists.
    \Cref{lem:inclusion relations} forbids the existence of any $\centertikz{\protect\node[clnode] (a) at (0,0) {}; \protect\node[nclnode] (b) at (1,0) {};\protect\draw[preorder] (a) -- (b);}$ edge. 
    If we include reference choices that satisfy the assumptions of \cref{lem:perfect state distinguishability}, then the corresponding vertices (high up in the tree) are necessarily of $\centertikz{\protect\node[clnode] {};}$ type.
    Broken arrows $\centertikz{\protect\draw[dashed,black!90,line width=0.7pt, arrows={-Latex}] (0,0) -- (15pt,0);}$ indicate additional information about a particular vertex.
    }
    \label{fig:typical graph}
\end{figure}

\subsection{Visualizing changes of reference: the relative noncontextuality graph}
\label{sec:relative nc graph}

We now introduce a convenient way of visualizing the effect of choosing different references. Given a prepare-and-measure scenario $\apm$, we introduce a directed graph $G = (V,E)$ whose vertex set is a (finite) set of references, i.e., $V = \left\{\pmargs{\rsets^{(i)}}{\rsete^{(i)}}\right\}_i$.
For any two vertices $v, v' \in V$, the directed edge $(v,v')$ belongs to the edge set $E$ if and only if $v \refordereq v'$, i.e., if the reference $v'$ distinguishes the procedures of $\apm$ at least as well as the reference $v$ does.
Strictly speaking, one should rather associate the vertices in $V$ with equivalence classes of references that induce the same indistinguishability relations on the prepare-and-measure scenario: this is a pragmatic choice since the reference only enters the definition of a relative noncontextual ontological model through the indistinguishability relations that the reference induces, and it furthermore guarantees that the graph $G$ is acyclic.\footnote{For any set $V$ with preorder relation $\preceq$, the directed graph $G = (V,E)$ where $(v,v') \in E \lequiv v\preceq v'$ may in general be cyclic. 
However, let $\sim$ be the equivalence relation induced from $\preceq$ as $v \sim v'$ if and only if both $v\preceq v'$ and $v'\preceq v$, and let $\tilde V$ be the set of equivalence classes of $V$ with respect to $\sim$, where for any $v\in V$ the associated equivalence class is $[v] \in \tilde V$.
Then, consider the quotiented graph $\tilde G = (\tilde V,\tilde E)$ where $([v],[v']) \in \tilde E$ if and only if (i) $v \preceq v'$ and (ii) $[v] \neq [v']$: this graph is acyclic by construction. Indeed, by transitivity of the preorder relation, a cycle would imply that there exists a directed edge from $[v]$ to itself, contradicting the assumption (ii).}
%
%
We then equip the graph with a Boolean function $f : V \to \{\centertikz{\node[clnode] {};},\centertikz{\node[nclnode] {};}\}$ such that $f(v) = \centertikz{\node[clnode] {};}$  if $\apm$ admits a relative noncontextual ontological model with respect to the reference $v$ and $f(v) = \centertikz{\node[nclnode] {};}$ else.
A typical such graph $G = (V,E)$ and Boolean function $f$ is shown in \cref{fig:typical graph}, where the main features that this graph inherits from \cref{lem:need faithful,lem:inclusion relations,lem:perfect state distinguishability} are presented. Importantly, the graph shown in \cref{fig:typical graph} is obtained from a prepare-and-measure scenario $\apm$ that does not admit a relative noncontextual ontological model with respect to the reference $\rpm = \apm$, for otherwise, \cref{lem:inclusion relations} proves that the graph would only ever feature $\centertikz{\node[clnode] {};}$ nodes (as long as all references of the graph are faithful ones).
We include in our graph some choices of references that satisfy the assumptions of \cref{lem:perfect state distinguishability}: there exists a relative noncontextual ontological model for $\apm$ with respect to these references, so that the corresponding vertices in the relative noncontextuality graph are $\centertikz{\node[clnode] {};}$ vertices.
Such vertices are typically relatively high up the tree, since they correspond to ``good'' references with respect to distinguishing the preparations. They are not necessarily ``terminal vertices'' since other choices of references could improve the distinguishability of the effects.
In \cref{fig:typical graph}, several $\centertikz{\node[nclnode] (a) at (0,0) {}; \node[clnode] (b) at (1,0) {};\draw[preorder] (a) -- (b);}$ edges are present: each of these corresponds to a transition from contextuality to noncontextuality, or, conceptually, to a transition from non-classicality to classicality.
In general, we do not expect there to be a single such edge: multiple inequivalent transitions from non-classicality to classicality are logically possible for a given prepare-and-measure scenario.

\section{Choices of references and notions of systems}
\label{sec:choices of references}

\subsection{Preliminaries}

Given a prepare-and-measure scenario $\apm$, we have so far only described the question of whether or not $\apm$ admits a relative noncontextual ontological model with respect to some choice of reference $\rpm$.
Importantly, different choices of references are typically associated to different answers to this question --- the generic case is depicted in \cref{fig:typical graph}.
If $\rpm$ is sufficiently large, there will be no statistical equivalences at all: the smallest operational distinction between two operational procedures can always be witnessed.
As \citeref{pusey_robust_2018} puts it, 
\say{
it is always logically possible that there exists a measurement that simply reads out a complete description of the preparation. In that case, no two distinct preparations would be equivalent, and so any model would be trivially noncontextual}{4} --- see also \cref{lem:perfect state distinguishability}.
Operationally, such a measurement could be implemented by looking at suitable video recordings of what the agent did in the lab.
To obtain a meaningful notion of classicality from relative noncontextuality, one thus has to motivate a restriction of the scope of procedures going into the reference $\rpm$, but moreover, it can be desirable to fix the reference once and for all when looking at a given prepare-and-measure scenario.
The rest of this section is dedicated to reviewing different proposals for this purpose.

\paragraph{System-based approaches to noncontextuality}

Most of the proposals of this section will fall into what we would like to call system-based approaches to noncontextuality. 
The overarching idea of such approaches is to use some definition of system that relates to the prepare-and-measure scenario and that furthermore suggests a candidate reference $\rpm$.
Depending on the relations that one posits between the prepare-and-measure scenario, the reference and the system, this yields different approaches to noncontextuality.
It is important to already highlight that a physical theory, such as quantum theory, does not provide an absolute definition of system.
In non-relativistic quantum mechanics, the typical situation is that each experiment will be described through operators acting on a Hilbert space. The latter is thus eligible as defining indirectly the system. 
However, systems need not be in one-to-one correspondence with Hilbert spaces: it is, to some extent, up to the practitioner of quantum mechanics to place the cut between the system and the environment, between the relevant and irrelevant procedures, etc.
Even under the assumption that each system should correspond to some Hilbert space, this does not solve the bulk of the issue: any prepare-and-measure scenario can be physically embedded into several inequivalent Hilbert space representations.

\subsection{The limits of in-principle indistinguishability}
\label{sec:limits of in-principle indistinguishability}

Before turning to the actual proposals, it is useful to fix, to some extent, the terminology that we shall use, as well as some methodological elements.
In \citeref{spekkens_ontological_2019}, noncontextuality is motivated based on a methodological principle traced back to Leibniz.
Here, we would like to discuss one specific point that is relevant for our purposes: namely, that \say{
the Leibnizian methodological principle does not appeal to a parochial kind of empirical indiscernibility, judged relative to the particular in-born capabilities of humans or their particular technological capabilities at a given historical moment, but rather to the in-principle variety of empirical indiscernibility}{8, \cite{spekkens_ontological_2019}}.
We argue that it is conceptually and methodologically advantageous for our purposes to consistently replace the phrase ``in-principle indiscernibility'' with something along the lines of ``indiscernibility with respect to all procedures pertaining to the system''.

\begin{figure}[ht]
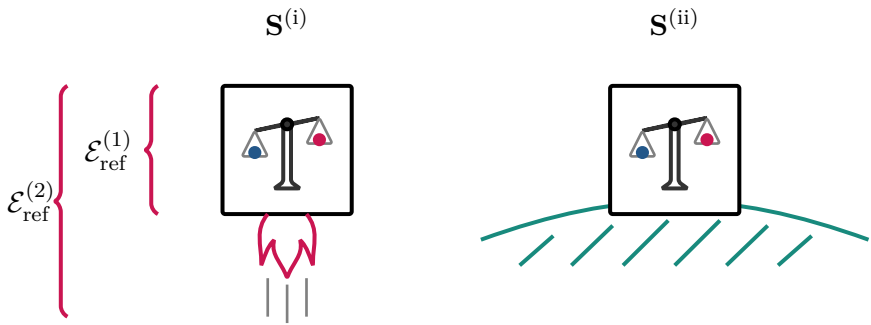

$$
\centertikz{
\begin{scope}[line width=1.5pt,rounded corners=1pt,scale=1.7]
\fill[white] (-1,0) circle (1pt);
\begin{scope}[draw=green!90!black]
\draw (2,-0.2) to [out=20,in=160] ++(3,0);
\foreach \x/\y in {0/1,1/2,2/2.5,3/2.5,4/2,5/1} 
    \draw (2.3+\x*0.4,-0.4) -- ++(0.2+\y*0.06,0.15+\y*0.08);
\end{scope}
\foreach \x in {0,3} {
\begin{scope}[xshift=\x cm]
    \fill[white] (0,0) rectangle +(1,1);
    \draw (0,0) rectangle +(1,1);
    \draw[black!80] (0.4,0.2) -- (0.6,0.2) -- (0.53,0.25) -- (0.53,0.7) -- (0.75,0.75) -- (0.25,0.65) -- (0.47,0.7) -- (0.47,0.25) -- cycle;
    \draw[fill=black!80] (0.5,0.7) circle (1.1pt);
    \draw[black!50,line width=1pt] (0.75,0.75) -- (0.85,0.55) -- (0.65,0.55) -- cycle;
    \draw[black!50,line width=1pt] (0.25,0.65) -- (0.35,0.45) -- (0.15,0.45) -- cycle;
    \draw[red,fill=red] (0.75,0.58) circle (1pt);
    \draw[blue,fill=blue] (0.25,0.48) circle (1pt);
\end{scope}
}
\draw[red] 
(0.35,0) to[out=230,in=90] 
(0.3,-0.4) to[out=70,in=230] 
(0.4,-0.2) to[out=260,in=110] 
(0.5,-0.5) to[out=70,in=280] 
(0.6,-0.2) to[out=310,in=110] 
(0.7,-0.4) to[out=90,in=310] 
(0.65,0);
\draw[black!50,line width=1pt] (0.35,-0.8) -- ++(0,0.3) (0.5,-0.85) -- ++(0,0.3) (0.65,-0.8) -- ++(0,0.3);
\draw[red,decorate,decoration={brace,amplitude=5pt}] (-0.5,0) -- node[left=5pt,black] {$\rsete^{(1)}$} (-0.5,1);
\draw[red,decorate,decoration={brace,amplitude=5pt}] (-1.2,-0.8) -- node[left,black] {$\rsete^{(2)}$} (-1.2,1);
\fill[white] (5.8,0) circle (1pt);
\node at (0.5,1.5) {$\prep^\text{(i)}$};
\node at (3.5,1.5) {$\prep^\text{(ii)}$};
\end{scope}
}
$$
\caption{Inspired from \citeref{spekkens_ontological_2019}. Consider the preparations $\prep^\text{(i)} =$ ``perform an experiment in a lab accelerated to $+g$ by means of a rocket'' and $\prep^\text{(ii)} =$ ``perform the same experiment in a lab on the surface of Earth whose gravitational acceleration is $-g$''. The two preparations are indistinguishable with respect to the measurements in the set $\rsete^{(1)}$ that only look at the inside of the lab, but are distinguishable with respect to the measurements in the set $\rsete^{(2)}$ that look outside of the lab.}
\label{fig:gr dist}
\end{figure}

\paragraph{Distinguishability in physical theories.}

For instance, consider the case of a closed lab that is either (i) accelerated vertically by means of a rocket attached to the bottom of the lab or (ii) at rests on the surface of Earth, such that the acceleration felt inside the lab is identical in the two cases, as discussed in Section I.B of \citeref{spekkens_ontological_2019}, and which we illustrate in \cref{fig:gr dist}. Clearly, the type of indiscernibility relation that relates the cases (i) and (ii) rests on the assumption that the agent in the lab does not look outside the lab. Thus, rather than thinking of ``in-principle indiscernibility between (i) and (ii)'', we would rather think of ``indiscernibility between (i) and (ii) with respect to arbitrary procedures pertaining to the spatial extent of the inside of the lab''. 
This is in fact consistent with typical statements of the strong equivalence principle \cite{haugan_principles_2001} that entail restrictions on the type of experiments (typically referred to as ``local experiments'') that should be considered when formulating equivalences between scenarios such as (i) and (ii).
These considerations extend more generally to other theories such as quantum theory, which does not contain fundamental indistinguishability relations, but rather indistinguishability relations with respect to some system (or, more generally, with respect to a restriction of the scope of experiments that are allowed to be used to verify the indistinguishability relation).
Thus, we would like to negate, or at least question, the idea that, in the context of indistinguishability, \say{what is possible in principle is determined
by the physical theory that one is assessing}{5, \cite{catani_reply_2022}}. For instance, in the example of \cref{fig:gr dist}, general relativity does not prevent one from analyzing light coming into the lab from the outside and distinguishing the scenarios (i) and (ii).
Similarly, quantum theory does not prevent one from distinguishing between the preparation described as ``prepare a uniform mixture of the two orthogonal states $\ket0$ and $\ket1$'' and ``prepare a uniform mixture of the two orthogonal states $\ket\pm = (\ket0 \pm \ket1)/\sqrt 2$'': indeed, to distinguish the two, it is always possible to look at video recordings of what happened in the lab, or to investigate side effects over the environment of the different preparations corresponding to $\ket0$, $\ket1$ and $\ket\pm$ --- e.g., it could be that the preparations in fact really prepare states of the form $\ket 0 \otimes \ket{\psi_0}_E$, $\ket 1 \otimes \ket{\psi_1}_E$, $\ket\pm \otimes \ket{\psi_\pm}_E$ where $\ket{\psi_0}_E$, $\ket{\psi_1}_E$ and $\ket{\psi_\pm}_E$ are four orthogonal states of the environment (in fact, the case of the video recording falls into this category upon identifying the possible contents of the video recording with these four perfectly distinguishable states).

\paragraph{Applications to noncontextuality.}
Of course, one can define in-principle indistinguishability as a contextual statement where the context provides the relevant notion of system, but this does not fit our goal of interrogating the interplay between different notions of systems and noncontextuality.
For instance, we do not embrace without additional considerations the statement that
\say{
generalized noncontextuality only implies constraints on the ontological representations of operational processes which are genuinely operationally equivalent --- that is, those that give exactly the same predictions for any operational scenario in which they appear. For example, two preparation procedures are only operationally equivalent if they give the same statistics for the outcomes of all possible measurements
}{20, \cite{selby_open-source_2022}}.
Such a statement only makes sense with respect to an underlying notion of system, 
and we wish to make this assumption explicit in the absence of a consensus on the status of this assumption.
Furthermore, even if the community comes to a consensus on the appropriate notion of system that one should use in this context, there remains the question of whether one can provide a definitive methodology for identifying the specific system that is relevant for a given prepare-and-measure scenario.

\subsection{Operationally noncontextual ontological models}
\label{sec:operational noncontextuality}

A simple approach is to choose the prepare-and-measure scenario $\apm$ itself as the reference. This is, presumably, the approach that requires the least amount of additional considerations and inputs from other notions, such as notions of system. In that sense, one can think of this approach as being an approach to noncontextuality that does not involve systems.
Alternatively, one can think of this approach as using a notion of system in which any prepare-and-measure scenario can be thought of as defining a system that does not extend beyond the prepare-and-measure scenario. 
This yields the following definition:

\begin{definition}[Operational noncontextuality]
\label{def:operational noncontextuality}
A prepare-and-measure scenario\linebreak$\apm$ is said to be operationally noncontextual if it admits a relative noncontextual ontological model with respect to the reference being $\apm$ itself, otherwise, $\apm$ is said to be operationally contextual.
\end{definition}

In fact, this choice turns out to be equivalent to the reduced space approach of \citeref{gitton_solvable_2022}.\footnote{Strictly speaking, operational noncontextuality as defined in \cref{def:operational noncontextuality} is a slight generalization of \citeref{gitton_solvable_2022}: the difference lies in whether or not one imposes that effects corresponding to a complete set of outcomes are all included in the prepare-and-measure scenario.}
There, a so-called operationally noncontextual ontological model was formulated for a quantum (or, more generally, GPT) model of a prepare-and-measure scenario.
Establishing the equivalence thus relies on formulating a sensible definition of a quantum representation of an operational prepare-and-measure scenario as defined in the present \cref{def:pm scenario}.
Consider the following informal equivalence statement, 
which is formalized and proven in \cref{sec:proof reduced space equivalence} --- it includes in particular a proof that any prepare-and-measure scenario admits a quantum model. 
\begin{theorem}[Informal]
\label{th:reduced space equivalence}
Consider a prepare-and-measure scenario $\apm$, and consider any quantum model of $\apm$.
$\apm$ is operationally noncontextual if and only the quantum model of $\apm$ admits an operationally noncontextual ontological model in the sense of \citeref{gitton_solvable_2022}.
\end{theorem}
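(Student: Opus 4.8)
The plan is to first pin down what a \emph{quantum model} of an operational prepare-and-measure scenario is, then to prove that such a model always exists, and finally to show that the two noncontextuality questions coincide by matching the relevant equivalence relations. I would define a quantum model of $\apm$ to consist of a finite-dimensional Hilbert space $\mathcal H$, a density operator $\rho_\prep$ on $\mathcal H$ for each $\prep \in \asets$, and a positive operator $E_\effect$ bounded above by the identity for each $\effect \in \asete$, extended convex-linearly to $\convhull\asets$ and $\convhull\asete$, subject to the constraints that $E_\eventset$ is the identity, $E_\emptyset = 0$, the additivity relations of \cref{eq:additivity case} are reflected by operator additivity, and $\probarg{\effect}{\prep} = \mathrm{tr}(E_\effect \rho_\prep)$ for all $\prep \in \asets$, $\effect \in \asete$. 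Existence is then settled by the diagonal construction: let $\mathcal H$ carry an orthonormal basis $\{\ket{\prep}\}_{\prep\in\asets}$, let $\rho_\prep$ be the projector onto $\ket{\prep}$, and let $E_\effect$ be diagonal in this basis with $\prep$-th entry $\probarg{\effect}{\prep}$. The data-table axioms (trivial event, additivity) translate immediately into the required operator identities, and the Born rule holds by construction; note that this is essentially the trivial ontological model mentioned after \cref{def:ontological model}, viewed through the quantum lens.

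The heart of the argument is a bridging lemma that I would isolate: for \emph{any} quantum model of $\apm$ and all $\prepdens, \prepdens' \in \convhull\asets$, one has $\mathrm{tr}(E_\effect \rho_{\prepdens}) = \mathrm{tr}(E_\effect \rho_{\prepdens'})$ for every $\effect \in \asete$ if and only if $\prepdens \indist{\asete} \prepdens'$, and dually for effect densities with the roles of $\asets$ and $\asete$ exchanged. This is immediate from $\probarg{\effect}{\prep} = \mathrm{tr}(E_\effect\rho_\prep)$ together with convex-linearity, but it is exactly the point: the operational equivalence relation that the reduced-space approach of \citeref{gitton_solvable_2022} reads off from the quantum operators coincides with the indistinguishability relation $\indist{\asete}$ (respectively $\indist{\asets}$) that enters \cref{def:classical model} when the reference is taken to be $\apm$ itself. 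In particular, passing to the reduced space --- the quotient of the operator space by the directions annihilated by all the $\rho_\prep$, respectively by all the $E_\effect$ --- does not change these equivalence classes, so one may work with the reduced model or the full model interchangeably.

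Given the lemma, both directions follow by transport of structure. For the forward implication, assume $\apm$ is operationally noncontextual and fix a quantum model; take a relative noncontextual ontological model $\omodel$ for $\apm$ with reference equal to $\apm$, and define the candidate operationally noncontextual ontological model of the quantum model by assigning to $\rho_\prep$ the ontic state distribution $\ostate{\prep}$ and to $E_\effect$ the ontic response function $\orep{\effect}$, extended convex-linearly. Well-definedness as a function of the operators is precisely where the lemma is used: two preparations with the same operator share the same statistics on $\asete$, are therefore $\indist{\asete}$-equivalent, and hence receive the same ontic state by \cref{eq:preparation relative noncontextuality}; the noncontextuality requirement of \citeref{gitton_solvable_2022} is checked the same way via the dual statement. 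Normalization, positivity, additivity (matching \cref{eq:trivial response function,eq:ontic response function additivity}), convex-linearity and reproduction of the statistics via \cref{eq:ontic statistics} are all inherited directly from $\omodel$. For the converse, given an operationally noncontextual ontological model of the quantum model in the sense of \citeref{gitton_solvable_2022}, precompose its ontic-state and response-function assignments with $\prep \mapsto \rho_\prep$ and $\effect \mapsto E_\effect$; the result is an ontological model for $\apm$, and the lemma guarantees that the reference-$\apm$ noncontextuality relations of \cref{eq:relative noncontextuality} hold, so $\apm$ is operationally noncontextual. Since operational noncontextuality is a property of $\apm$ alone, this simultaneously shows that the verdict does not depend on which quantum model was chosen.

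The main obstacle I anticipate is not a hard computation but the bookkeeping required to make the two formalisms line up exactly: fixing the definition of a quantum model so that the additivity and trivial/null-event structure of $\asete$ corresponds cleanly to \cref{eq:trivial response function,eq:ontic response function additivity}, faithfully transcribing the reduced-space construction of \citeref{gitton_solvable_2022} into the present notation, and dealing with the caveat flagged in the footnote to \cref{def:operational noncontextuality} --- namely whether $\apm$ is required to contain, for each measurement it involves, a complete family of outcome effects. I expect this last point to demand either a mild strengthening of \cref{def:pm scenario} for the purposes of this theorem, or a short lemma showing that adjoining the missing outcome effects leaves the verdict of operational noncontextuality unchanged.
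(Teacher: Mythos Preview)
Your approach matches the paper's: the same diagonal quantum model for existence, a bridging lemma identifying the reduced-space equivalence classes of \citeref{gitton_solvable_2022} with $\indist{\asete}$ and $\indist{\asets}$, and transport of the ontic data in both directions. You also correctly anticipate the outcome-completeness caveat, which the paper resolves by adding it as a hypothesis in the formalized statement.

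There is one genuine gap. In the converse direction you precompose the given operationally noncontextual ontological model with $\prep \mapsto \rho_\prep$, $\effect \mapsto E_\effect$ and declare the result a relative noncontextual ontological model for $\apm$. But \cref{def:ontological model} requires the ontic space $\ospace$ to be \emph{finite}, whereas the models of \citeref{gitton_solvable_2022} allow infinite (measurable) ontic spaces. The paper closes this by restricting the formal statement to Riemann-integrable models and then invoking theorem~3 of \citeref{gitton_solvable_2022}, which bounds the ontic-space cardinality by $\dim(\mainr)^2$; without this step your constructed object need not satisfy \cref{def:ontological model}. A smaller point: your bridging lemma as literally stated is a tautology (Born rule in, Born rule out). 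The non-trivial content is that equality of \emph{reduced-space projections} is equivalent to $\indist{\asete}$-equivalence, which the paper isolates as a separate lemma whose proof genuinely uses that the projected effects span $\mainr$ and that the reduced inner product recovers the trace pairing --- you gesture at this but do not supply it.
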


Other approaches to noncontextuality can occasionally agree with operational noncontextuality: this is for instance the case of \citerefs{ferrie_frame_2008,spekkens_negativity_2008}, where it is shown that a prepare-and-measure scenario corresponding to all quantum states and measurements on some Hilbert space does not admit a noncontextual ontological model with respect to the indistinguishability relations that can be read off from the Hilbert space representation of the prepare-and-measure scenario --- the proof is explicitly reproduced is section 3.3.2 of \citeref{gitton_solvable_2022}.

\paragraph{Is nature operationally contextual?}

One can say, for instance, that nature is indeed operationally contextual because it admits prepare-and-measure scenarios $\apm$ that are operationally contextual \cite{ferrie_frame_2008,spekkens_negativity_2008}. Looking back at the relative noncontextuality graph of \cref{fig:typical graph}, we see that this is the most easily achieved type of contextuality, since we do not allow the reference to be any better than what the prepare-and-measure scenario itself provides.
In that sense, it is sensible to say that nature is contextual according to this approach, but this claim is the easiest to establish, compared to some other system-based approaches that we describe below.
On the other hand, compared to these other approaches, the operational contextuality conclusions are robust in the sense that the reference is uniquely and unambiguously identified given the prepare-and-measure scenario.

\subsection{Pragmatic-system approaches to noncontextuality}
\label{sec:pragmatic noncontextuality}

In this section, we describe an approach to noncontextuality that is based on a pragmatic notion of system.
A pragmatic system is based on an operational notion of system
in which a system is defined as a set of operational procedures that are thought of as ``probing primarily the system''.
The complement of these operational procedures correspond to those procedures that ``probe the system as well as other systems''.
Obviously, this definition is not particularly formal, but it should be sufficient for our purposes.
This definition is supported for instance by \citeref{mueller_testing_2021}, which suggests to 
\say{
think of physical systems as being defined by an experimental scenario
}{19}.
A pragmatic system is a construction on top of an operational system: namely, a pragmatic system is an operational system together with a justification of its importance. Such a justification typically involves relating the set of procedures that defines the pragmatic system to other notions such as, e.g., the richness of a pragmatic system's phenomenology, the relation of this system to interesting constraints or interesting tasks, etc.

\paragraph{Examples of pragmatic systems.}

For instance, based on the collective interest that we have in photonic experiments, it is pragmatically justified to think of a photon polarization system as corresponding to procedures that \say{[probe] a photon’s polarization in conventional ways (using wave plates and beam splitters)}{2, \cite{mazurek_experimentally_2021}}. 
Furthermore, one can think of discriminating between different pragmatic systems based on statistical evidence: this is for instance a possible reading of some of the considerations of \citeref{mazurek_experimental_2016}, whose system of interest is the polarization of a single photon, and where other choices of systems related to multiple photons are excluded statistically by experimentally demonstrating the rarity of multi-photon events.
In \citeref{spekkens_contextuality_2005}, 
it is stated that \say{An equivalence class of preparation procedures is associated with a density operator $\rho$. This is a positive trace-1 operator over the Hilbert space $\hil$ of the system}{3}: this can be seen as a pragmatic definition of quantum mechanical systems being associated with some Hilbert space.
This is a generic feature of pragmatic systems: a physical theory and the associated phenomenology contribute towards understanding certain systems as potentially more fundamentals than others.
For a different flavor of pragmatism, \citeref{spekkens_preparation_2009} considers a system defined by an operational constraint, namely, a parity-oblivious constraint, that restricts the scope of possible measurements: \say{The parity-oblivious constraint requires that [...] there is no outcome of any measurement for which posterior probabilities for [parity] 0 and [parity] 1 are different}{2}. Such a system inherits its relevance from the relevance of the parity-oblivious constraint.

\paragraph{Pragmatic-system approaches to noncontextuality.}

Following the idea that the reference of indistinguishability to be used in a relative noncontextual ontological model should be associated with an underlying system, a pragmatic system is eligible to be used as such. This gives the following notion of noncontextuality:
\begin{definition}[Pragmatic noncontextuality]
\label{def:pragmatic-system noncontextuality}
A prepare-and-measure scenario $\apm$ is said to be pragmatically noncontextual if, upon identifying a pragmatic system to which $\apm$ belongs, $\apm$ admits a relative noncontextual ontological model with respect to the reference $\rpm$ that contains all relevant procedures from the underlying pragmatic system. Otherwise, $\apm$ is said to be pragmatically contextual.
\end{definition}
In general, the relevance of a statement of pragmatic noncontextuality is inherited from 
the extent to which the pragmatic system is well-chosen. As such, there is an influence from the relevance of a pragmatic system to the relevance of pragmatic noncontextuality claims.
However, if pragmatic noncontextuality is deemed a relevant notion, there may also be a back-action: for instance, if one prefers to have a pragmatic noncontextual ontological model for a prepare-and-measure scenario, then this could suggest to adjust one's pragmatically chosen system to allow for such a pragmatic noncontextual ontological model.
For instance, in \citeref{spekkens_evidence_2007}, a toy theory that can reproduce some quantum phenomena is developed. This toy theory can be seen as a noncontextual ontological model for a subset of procedures on a qubit (see, e.g., appendix D of \citeref{schmid_characterization_2021}).
Furthermore, in \citeref{spekkens_evidence_2007}, it is claimed that the toy theory, being a theory of epistemic states, brings us to a better conceptual understanding of \say{a great number of quantum phenomena that are mysterious from the ontic viewpoint}{2}.
This supports the idea that noncontextual ontological models are useful for our understanding of nature. As a byproduct, this supports the idea that one's notion of pragmatic systems should not be completely agnostic of considerations of pragmatic noncontextuality.

\paragraph{Is nature pragmatically contextual?}

An important property of a pragmatic system is the extent to which it is thought of as being ``fundamental''.
In the literature, pragmatic systems associated to some Hilbert space are frequently considered.
These are typically considered as being rather fundamental. For instance, in \citeref{spekkens_contextuality_2005}, it is said that 
\say{This terminology allows one to use the phrase \emph{“quantum theory is contextual”} as a shorthand for \emph{“quantum theory does not admit a noncontextual ontological model”}}{3}.
This statement may suggest that \citeref{spekkens_contextuality_2005} 
considers that quantum theory is primarily concerned with a particular choice of pragmatic systems, namely, those associated with some Hilbert space $\hil$: recall that there, \say{[a]n equivalence class of preparation procedures is associated with a density operator $\rho$. This is a positive trace-1 operator over the Hilbert space $\hil$ of the system}{3, \cite{spekkens_contextuality_2005}}.
Since there exist pragmatically contextual prepare-and-measure scenarios when the pragmatic system is associated to some Hilbert space \cite{spekkens_contextuality_2005,ferrie_frame_2008,spekkens_negativity_2008}, this shows that nature is indeed pragmatically contextual with respect to this type of pragmatic systems.
However, there remains the subjectivity of one's choice of pragmatic Hilbert space to associate to a prepare-and-measure scenario: as \citeref{mueller_testing_2021} puts it, \say{tomographic completeness is contingent on the way that we define the boundaries of the physical system of interest. For example, doing tomography on the polarization of a photon necessarily ignores all other aspects of its wavefunction; all physical objects are embedded in larger environments that have to be disregarded}{19}.
Furthermore, as we discussed in the previous paragraph, the conceptual usefulness of having access to a noncontextual ontological model for a prepare-and-measure scenario \cite{spekkens_evidence_2007} could potentially be used to argue against this type of pragmatic system as being the most adequate choice.
This is indeed a major distinction between (i) the pragmatic systems corresponding to local experiments \cite{haugan_principles_2001} that appear in the equivalence principle, and (ii) the pragmatic systems corresponding to Hilbert spaces that appear in traditional proofs of contextuality. While the choice of case (i) allowed for (and, in fact, fuelled) the development of general relativity, the choice of case (ii) forbids the use of noncontextual ontological models as alternative descriptions of quantum mechanics.
As such, a proponent of the pragmatic approach to noncontextuality that wishes to infer from traditional proofs of contextuality \cite{spekkens_contextuality_2005,ferrie_frame_2008,spekkens_negativity_2008} that nature itself is contextual should further motivate the reasons for considering other choices of pragmatic system assignments to prepare-and-measure scenarios as being unreasonable.

\subsection{Ontic-system approaches to noncontextuality}
\label{sec:ontic-system noncontextuality}

Let us turn to another system-based approach to noncontextuality.
Recall that in the previous notions of pragmatic systems, systems were defined primarily through a choice of operational procedures that were then justified and related to other conceptual notions: in that sense, the notion of procedures preceded the notion of system.
One can approach the notion of system from a different angle, and argue that a system is a conceptual entity that exists independently of our attempts to probe it: in that sense, a system is seen as more fundamental than potential procedures that probe it. Let us refer to such a system as an ``ontic system''.
The name reflects the fact that the cut between the system and its environment, in this approach, is seen as a true, ontic property of nature.
To each ontic system, we attach a complete set of operations that probe the ontic system: let us refer to this set as the ``ontic set of operations''.
In general, this set contains operations that have not yet been performed, for which we do not have a description yet, and whose derived statistics are unknown.
For instance, let us consider the investigations of \citeref{mazurek_experimentally_2021} with respect to the polarization degree of freedom of a photon. In an ontic-system interpretation of the polarization degree of freedom of a photon, it is sensible to expect that \say{The full scope of possible preparations and measurements for photon polarization might be radically different from what our quantum expectations dictate (incorporating new exotic procedures)}{3, \cite{mazurek_experimentally_2021}}: in other words, in general, the ontic set of operations probing an ontic system can incorporate procedures that are not necessarily known to present day.
Presumably, one's operational characterization of an ontic system is in general a non-empty inner approximation of the ontic set of operations: this relies on the fact that, typically, one trusts that some known operations really probe the ontic system under consideration. To support this claim, one may for instance 
\say{presume that there is a principle of individuation
for different degrees of freedom, which is to say a way to
distinguish what degree of freedom an experiment is probing. For instance, we presume that we can identify certain
experimental operations as preparations and measurements
of photon polarization and not of some other degree of
freedom}{9, \cite{mazurek_experimentally_2021}}.

\paragraph{The example of a particle.} 

To give an explicit example, consider a particle, and in particular, three flavors thereof: a classical particle, described through its position and momentum, a non-relativistic quantum particle, described with its wavefunction, and a hypothetical post-quantum particle. 
In the pragmatic approach to systems, the classical, quantum and post-quantum particle are three valid and distinct pragmatic systems that are defined, primarily, through the associated operations. The reason why these three pragmatic systems are distinct is because they are associated to different operations. These three systems are however not necessarily unrelated, since one can see a classical particle as corresponding to a restricted set of operations on a quantum particle, and one may argue that the quantum particle is a more fundamental pragmatic system than the classical-particle pragmatic system.
On the other hand, in the ontic-system approach, a possible ontic system could be ``a particle''. The operations associated to a classical particle would be a naive attempt to try to probe a particle, but more fundamentally, the operations associated to a quantum particle probe more properties of the particle.
Whether or not the quantum operations associated to a quantum particle completely capture the essence of the particle depends on the existence of a post-quantum theory that would come with a notion of particle and new operations on it.

\paragraph{Ontic-system noncontextuality.}

We now discuss what an ontic-system approach to noncontextuality could look like. Consider this quote from \citeref{pusey_contextuality_2019}:
\say{The operational approach to contextuality due to Spekkens requires finding operationally equivalent preparation procedures \cite{spekkens_contextuality_2005}. Previously these have been obtained by demanding indistinguishability under a set of measurements taken to be tomographically complete. In the language of generalised probability theories, this requires the ability to explore all the dimensions of the system’s state space. However, if the true tomographically complete set is larger than the set assumed, the extra measurements could break the operational equivalences and hence eliminate the putative contextuality. Such extra dimensions could arise in post-quantum theories, but even if quantum theory is exact there can be unexpected degrees of freedoms due to imperfections in an experiment}{1}.
In the language developed in this section, one may understand this quote as the following approach to noncontextuality.
To each prepare-and-measure scenario, one should first have a prescription of which ontic system the prepare-and-measure scenario is probing.
Because the system associated to the prepare-and-measure scenario is an ontic system, it is a conceptual entity whose ontic set of operations --- its \saynp{true tomographically complete set} --- is, in general, unknown.
Turning to noncontextuality, the reference for indistinguishability that one uses in noncontextuality should be the ontic set of operations of the ontic system --- in particular, equivalences between preparations should be based on \saynp{indistinguishability under a set of measurements taken to be tomographically complete}.
We can summarize these considerations in the following tentative definition:
\begin{definition}[Ontic-system noncontextuality]
\label{def:ontic-system noncontextuality}
A prepare-and-measure scenario\linebreak$\apm$ is ontic-system noncontextual if $\apm$ can be seen as part of an ontic system and if $\apm$ admits a relative noncontextual ontological model with respect to the reference $\rpm$ made out of the ontic set of operations corresponding to the relevant ontic system. Otherwise, $\apm$ is said to be ontic-system contextual.
\end{definition}

\paragraph{Is nature contextual in the ontic-system approach to noncontextuality?}

The ontic-system approach to noncontextuality is advantageous with respect to investigations of nature's contextuality in the following sense: because the ontic systems are understood as being fundamental entities independent of our technological capabilities, if there exists at least one ontic system that is contextual, then it is sensible to say that nature is contextual.
However, to establish a claim of ontic-system contextuality, one needs to assume that one's reference of indistinguishability captures all of the ontic set of operations associated to the ontic system.
Even under the strong assumption that all of our theories of nature are final and that there will be no significant update thereof, it seems hard to falsify the hypothesis that a certain set of operations really corresponds to the ontic set of operations associated to the ontic system: from an operational perspective, the separation between an ontic system and ``the rest'' is a subjective cut that is elevated to an ontic property of nature.
In particular, this cut is not forced upon us by a physical theory, as discussed in \cref{sec:limits of in-principle indistinguishability}.
It would also be interesting to discuss whether or not the notion of ontic system is reconcilable with the intuition that the ontic reality takes places on a continuous spacetime background.

\begin{figure}[ht]
    $$
\centertikz{
\begin{scope}[yscale=1.5,xscale=3.5]
%
%
\node (a) at (0.5,1) {Alice};
\foreach \xs/\ys/\xshi in {0.4/0.6/3cm} {
\begin{scope}[xshift=\xshi,yshift=0.8cm,xscale=\xs,yscale=\ys]
    \draw[line width=1.2pt,draw=red,dashed,fill=red!20] (0,0) to[out=90,in=230] (1,1) to[out=50,in=80] (2,0.5) to[out=260,in=20] (0.5,-\xs) to[out=200,in=270] cycle;
\end{scope}
}
\node at (3.8,0.8) {$R$};
\node[rectangle,draw=black!70,line width=1pt,rounded corners=2pt,inner color=black!0,outer color=black!3,minimum size=25pt] (d) at (1.5,1) {Classical device};
\draw[line width=1pt, draw=black!80, arrows={->}] (a) -- (d);
\draw[line width=1pt, draw=black!80, arrows={->}] (d) to[out=20,in=160] (3.6,1.2);
\draw[line width=1pt, draw=black!80, arrows={->}] (d) to[out=0,in=180] (3.4,1);
\draw[line width=1pt, draw=black!80, arrows={->}] (d) to[out=-20,in=200] (3.2,0.8);
\draw[line width=1.2pt, draw=red, dashed] (2.9,0) to[out=80,in=220] (3.2,2);
\draw[line width=1.2pt, draw=red, dashed] (2.6,0) to[out=60,in=230] (2.8,2);
\draw[line width=1.2pt, draw=red, dashed] (2.4,0) to[out=70,in=250] (2.5,2);
\node at (2.8,2.2) {Other possible cuts};
\end{scope}
}
$$
    \caption{Consider a procedure $\prep$ that Alice implements by interacting with a classical device which in turns interacts with a microscopic system. Interactions propagating in space are represented by $\centertikz{\protect\draw[line width=1pt,draw=black!80,arrows={->}] (0,0) -- (1,0);}$ arrows, and the arrows between the classical device and the macroscopic system should be understood as possible interactions (one may not know exactly how the information propagates to the microscopic system). Suppose that one attaches a spatial region $R$ to the procedure $\prep$: we represent $R$ as the shaded red region. This region necessarily looks like a cut in the measurement chain, i.e., a chosen subset of the spatial regions that must be influenced by the procedure $\prep$. A reasonable such cut can presumably be placed on the right of the classical device, but this does not guarantee the uniqueness of the cut.}
    \label{fig:measurement cut}
\end{figure}

\subsection{Locality considerations}
\label{sec:local noncontextuality}

Can locality considerations help in this context?
It is relatively clear that thinking about the local structure of an experiment can help to enrich one's notion of pragmatic-system noncontextuality, and can support one's notion of ontic-system noncontextuality.
Additionally, there exist relations between Bell non-locality and noncontextuality: for instance, as discussed in section V of \citeref{yadavalli_contextuality_2021}, Bell non-locality implies preparation contextuality.
It appears unclear, however, that one can entirely dismiss the above discussion of pragmatic and ontic systems in favor of some notion of locality.
Suppose for instance that one uses the following methodology of ``local noncontextuality'': given a prepare-and-measure scenario, (i) identify the region of spacetime over which the prepare-and-measure scenario takes place on, (ii) choose the reference made out of all the procedures pertaining to this region of spacetime, and (iii) investigate the existence of a relative noncontextual ontological model with respect to this reference.
An issue in this methodology is step (i): assigning a precise spacetime extent to a given procedure is rather subjective. Given any tentative spacetime region $R$ that one assigns to a given procedure, as is typically done in algebraic quantum field theory \cite{fewster_algebraic_2019}, there is necessarily information flowing from the agent to $R$, such that $R$ in fact looks like a subjective (although, of course, potentially pragmatic) cut in a measurement chain: we represent this situation in \cref{fig:measurement cut}.
This cut is possibly specified at a microscopic scale.
This is rather distinct from the locality considerations of, e.g., Bell scenarios \cite{bell_einstein_1964,clauser_proposed_1969}: there, the locality assumptions have to do with the spacelike separation of two macroscopic regions assigned to Alice and Bob.
If one tackles steps (i)-(ii) pragmatically, i.e., using as a reference a common, interesting choice of procedures that are typically taken to probe the spatial extent of the prepare-and-measure scenario, then one falls back on an instance of pragmatic-system noncontextuality.
Assuming that one obtains a verdict of contextuality, this opens up the door to discussions about whether there is a useful update of one's notion of (microscopic) locality that could be used to allow for a noncontextual explanation of the prepare-and-measure scenario.

\newpage
\section{Conclusion}
\label{sec:conclusion}

In \cref{sec:pm scenarios}, we introduced an operational formulation of prepare-and-measure scenarios including a reference for indistinguishability, i.e., including an explicit choice of operational procedures that one uses to talk about indistinguishability of the operational procedures belonging to the prepare-and-measure scenario.
This allowed us, in \cref{sec:relative noncontextuality}, to formulate noncontextual ontological models in a way where the indistinguishability relations that one uses are explicitly tied to a choice of reference for indistinguishability. We named these models relative noncontextual ontological models --- see \cref{def:classical model}.
Two choices of references are occasionally comparable if one induces a fine-graining or coarse-graining of the indistinguishability relations that the other induces.
Relative noncontextual ontological models are, in some sense, monotone under such fine-grainings or coarse-grainings: a better reference that can distinguish more procedures of the prepare-and-measure scenario makes it more likely for there to exist a relative noncontextual ontological model.
Conceptually, the generic situation is depicted in \cref{fig:typical graph}, which shows whether or not a given prepare-and-measure scenario admits a relative noncontextual ontological model with respect to different choices of references in a way compatible with the fine-graining relations that hold between different choices of references.
Ultimately, an arbitrarily good reference will allow for the existence of a relative noncontextual ontological model --- see \cref{lem:perfect state distinguishability}.
In order for relative noncontextual ontological models to be a useful conception of classicality for a prepare-and-measure scenario, it is thus necessary to restrict the scope of the reference to some extent.

\paragraph{Fixing the reference.}

We offered a few approaches towards answering the question of how to meaningfully fix the reference used in relative noncontextual ontological models in \cref{sec:choices of references}. We first used \cref{sec:limits of in-principle indistinguishability} to justify the idea that in-principle indistinguishability is, in this context, not descriptive enough.
Then, the simplest idea described in \cref{sec:operational noncontextuality} is to take the procedures of the prepare-and-measure scenario itself to act as a reference for indistinguishability.
The remaining two approaches are related to notions of systems. We distinguish two notions of systems: pragmatic systems are any set of operational procedures endowed with some relevance that is established pragmatically, while ontic systems are conceptual entities that pre-exist our attempts to probe them, and whose ontic set of operations (the set of operations that would probe the entire extent of the ontic system) is in general unknown.
The reference of indistinguishability can then be chosen according to the following tentative methodologies, respectively: given a prepare-and-measure scenario, choose an interesting pragmatic system to which the prepare-and-measure scenario belongs and use the procedures defining the pragmatic system as the reference for indistinguishability in relative noncontextual ontological models, or identify an ontic system to which the prepare-and-measure scenario belongs and use the best guess one has for the ontic set of operations of the ontic system to generate the reference of indistinguishability.

\paragraph{Outlook.}

This work did not discuss transformations, although these are also eligible to be described in a noncontextual ontological model \cite{spekkens_contextuality_2005}: it would be interesting to reflect further on possible insights drawn from the case of transformation noncontextuality.
Practically speaking, it could be of interest to adapt existing algorithms for testing noncontextuality \cite{schmid_all_2018,selby_open-source_2022} to the general case of relative noncontextuality.
With the exception of operationally noncontextual ontological models (\cref{sec:operational noncontextuality}), the approaches to fix the reference of indistinguishability that we described have not been fully developed and require further discussion and motivation.
Some concrete open questions are the following: in the approach of pragmatic noncontextuality, are there prepare-and-measure scenarios traditionally deemed contextual that could be reassessed as being noncontextual with respect to another choice of pragmatic system acting as the reference of indistinguishability? 
More generally, in both the pragmatic and ontic-system approaches to noncontextuality, to what extent is contextuality an artifact of our choice of partition of the world into subsystems, and how well can one motivate any such partition?
In light of the discussion of \cref{sec:local noncontextuality}, discussing these questions may provide an interesting prism for comparing different notions of microscopic locality.
With respect to the existing literature, we hope that our work will help to open up new avenues for clarifying the foundations of noncontextuality. 
This could for instance involve a conceptual and methodological enrichment of some of the approaches for fixing the reference that we developed in \cref{sec:choices of references}, but it could well be that other approaches defying our attempted categorization are of greater interest. 
Of course, one can always question the virtue of wanting to fix a reference in the first place. 
It is possible that there will never be a satisfactory methodology for doing so and that it is more meaningful to examine pragmatically chosen noncontextuality graphs (see \cref{fig:typical graph}) for a given experiment instead. 

\section*{Acknowledgments}

We thank Nuriya Nurgalieva and Martin Sandfuchs for useful feedback. This work was supported as a part of NCCR QSIT, a National Centre of Competence (or Excellence) in Research, funded by the Swiss National Science Foundation (grant number 51NF40-185902).
M.W.\ acknowledges support from the Swiss National Science Foundation (SNSF) via an AMBIZIONE Fellowship (PZ00P2\_179914).

\bibliographystyle{mybibstyle}
\bibliography{origins}

\onecolumn\newpage
\appendix

\section{Varying the reference in relative noncontextuality}
\label{app:varying the reference}

In this section, we restate and prove the results of \cref{sec:varying the reference}.

\LemNeedFaithful*
\begin{proof}
\Cref{eq:ontic statistics,eq:relative noncontextuality} together imply that, for all $\prepdens,\prepdens'\in\convhull\asets$, for all $\effectdens,\effectdens'\in\convhull\asete$,
\begin{equation}
    \prepdens \indist{\rsete} \prepdens' \textup{ and } \effectdens \indist\rsets \effectdens' \limplies 
      \probarg{\effectdens}{\prepdens} = \probarg{\effectdens'}{\prepdens'}.
\end{equation}
In particular, 
this implies that
\begin{equation}
    \prepdens \indist{\rsete} \prepdens' \limplies \forall \effectdens\in\convhull\asete \st \probarg{\effectdens}{\prepdens} = \probarg{\effectdens}{\prepdens'}.
\end{equation}
This implies
\begin{equation}
    \prepdens \indist{\rsete} \prepdens' \limplies \prepdens \indist{\asete} \prepdens'.
\end{equation}
Analogously, one can show that, for all $\effectdens,\effectdens' \in \convhull\asete$,
\begin{equation}
    \effectdens \indist{\rsets} \effectdens' \limplies \effectdens \indist{\asets} \effectdens'.
\end{equation}
Looking back at \cref{def:faithful reference}, we have thus shown that the existence of a relative noncontextual ontological model for a prepare-and-measure scenario $\apm$ with respect to a reference $\rpm$ implies that $\rpm$ is a faithful reference with respect to $\apm$.
\end{proof}

\LemInclusionRelations*
\begin{proof}
Suppose that $\apm$ admits a relative noncontextual ontological model\linebreak$\omodel$ with respect to the reference $\rpm$.
Then, the same ontological model is noncontextual with respect to the reference $\rpmm$, too. 
The only constraint to check is \cref{eq:relative noncontextuality}: for all $\prepdens,\prepdens'\in\convhull\asets$, we have both that $\prepdens \indist{\rsete'} \prepdens' \limplies \prepdens \indist{\rsete} \prepdens'$ (from \cref{eq:inclusion condition} and the \cref{def:ref preorder} of the reference preorder) as well as $\prepdens \indist{\rsete} \prepdens' \limplies \ostate{\prepdens}  = \ostate{\prepdens}$ (from \cref{eq:preparation relative noncontextuality} with respect to $\rpm$). As a result, we also have $\prepdens \indist{\rsete'} \prepdens' \limplies \ostate{\prepdens} = \ostate{\prepdens'}$, which is the noncontextuality condition of \cref{eq:preparation relative noncontextuality} with respect to the reference $\rpmm$. The ontic response function noncontextuality, \cref{eq:effect relative noncontextuality}, is established analogously.
\end{proof}

\LemPerfectStateDistinguishability*
\begin{proof}
Let the ontic space be $\ospace = \asets$. Define the ontic state and ontic response function mappings as follows: for all $\prepdens \in \convhull\asets$, $\prepdens = \sum_{\hv\in\ospace=\asets} \prob_\prepdens(\hv)\hv$, for all $\effectdens\in\convhull\asete$, for all $\hv \in \ospace = \asets$, let
\begin{subequations}
\begin{align}
    \ostate{\prepdens}(\hv) &= \prob_\prepdens(\hv), \\
    \orep{\effectdens}(\hv) &= \probarg{\effectdens}{\hv}.
\end{align}
\end{subequations}
These form well-defined distributions and effects in the sense of \cref{def:ontological model} (the representation of the trivial effect and the additivity under coarse-graining of the response function follows directly by the analogous properties of the probabilities $\probarg{\cdot}{\cdot}$ --- see \cref{sec:operational primitices}).
We can then verify \cref{eq:ontic statistics}:
\begin{subequations}
\begin{align}
    \sum_{\hv\in\ospace} \orep{\effectdens}(\hv)\ostate{\prepdens}(\hv) &= \sum_{\hv\in\ospace} \probarg{\effectdens}{\hv} \prob_\prepdens(\hv) \\
    &= \prob\left( \effectdens \middle| \textstyle\sum_{\hv\in\ospace = \asets} \prepdensarg\hv \hv \right) \\
    &= \probarg{\effectdens}{\prepdens}.
\end{align}
\end{subequations}
The relative noncontextuality of the ontic response function, \cref{eq:effect relative noncontextuality}, is satisfied, since, given $\effectdens,\effectdens' \in \convhull\asete$ such that $\effectdens \indist{\rsets} \effectdens'$, we also have, since $\rpm$ is a faithful reference for $\apm$, that $\effectdens \indist{\asets} \effectdens'$, and as such, for all $\hv \in \ospace = \asets$, 
\begin{equation}
    \effectdens \indist{\rsets} \effectdens' \limplies \orep{\effectdens}(\hv) = \probarg{\effectdens}{\hv} = \probarg{\effectdens'}{\hv} = \orep{\effectdens'}(\hv).
\end{equation}
The relative noncontextuality of the ontic state distribution, \cref{eq:preparation relative noncontextuality}, is trivially satisfied thanks to \cref{eq:perfect preparation distinguishability condition}: for all $\prepdens,\prepdens'\in\convhull\asets$, we have
\begin{equation}
    \prepdens \indist{\rsete} \prepdens' \limplies \prepdens = \prepdens' \limplies \ostate{\prepdens} = \ostate{\prepdens'}.
\end{equation}
The proposed relative noncontextual ontological model is thus a valid choice, and this concludes the proof.
\end{proof}

\newpage
\section{Relation to operationally noncontextual ontological models}
\label{sec:proof reduced space equivalence}

\subsection{Preliminaries}

Let us first define a quantum model for a prepare-and-measure scenario.
In the following, $\linops\hil$ denotes the real vector space of Hermitian linear operators acting on $\hil$, where $\hil$ is a finite-dimensional complex Hilbert space.
\begin{definition}
\label{def:quantum model}
Let $\apm$ be a prepare-and-measure scenario. A quantum model of $\apm$ is a triplet
\begin{equation}
    \qmodel
\end{equation}
where $\hil$ is a finite-dimensional Hilbert space, and the maps
\begin{subequations}
\begin{align}
    \qstate &\st \convhull\asets \to \linops{\hil}, \\
    \qeffect &\st \convhull\asete \to \linops{\hil},
\end{align}
\end{subequations}
are both convex-linear, and are such that for all $\prepdens\in\convhull\asets$, for all $\effectdens\in\convhull\asete$, it holds that
\begin{myitem}
\item $\qstate(\prepdens)$ is a density matrix, i.e., $\qstate(\prepdens)$ is positive semidefinite and trace-one;
\item $\qeffect(\effectdens)$ is a POVM element, i.e., $\qeffect(\effectdens)$ and $\id\hil-\qeffect(\effectdens)$ are positive semidefinite, and for all $\effect\in\asete$ such that $\effect = \eventset|\proc$ for some $\proc\in\procset$, it holds that
\begin{equation}
\label{eq:trivial qeffect}
    \qeffect(\effect) = \id\hil,
\end{equation}
and for all $\effect_0\in\asete$, $\{\effect_i\in\asete\}_{i=1}^n$ satisfying \cref{eq:gen add}, it holds that 
\begin{equation}
\label{eq:additivity qeffect}
\sum_{i=1}^n \qeffect(\effect_i) = \qeffect(\effect_0);
\end{equation}
\item the quantum model reproduces the statistics of the prepare-and-measure scenario:
\begin{equation}
\label{eq:qstats}
    \Tr[\qeffect(\effectdens)\qstate(\prepdens)] = \probarg{\effectdens}{\prepdens}.
\end{equation}
\end{myitem}
\end{definition}
Any prepare-and-measure scenario admits such a quantum model:%
\begin{lemma}
\label{lem:universal quantum model}
Let $\apm$ be a prepare-and-measure scenario such that $\asets = \{\prep_j\}_{j=1}^d$ for some $d\in\mathbb N$. Then, a valid quantum model is the following: let $\hil = \mathbb{C}^d$, and let $\{\ket j \}_{j=1}^d$ be an orthonormal basis of $\hil$. Define, for all $\prepdens \in \convhull\asets$, for all $\effectdens\in\convhull\asete$,
\begin{subequations}
\label{eq:trivial qmodel}
\begin{align}
    \qstate(\prepdens) &= \sum_{j=1}^d \prepdensarg{\prep_j} \ketbra{j}, \\
    \qeffect(\effectdens) &= \sum_{j=1}^d \probarg{\effectdens}{\prep_j} \ketbra{j}.    
\end{align}
\end{subequations}
\end{lemma}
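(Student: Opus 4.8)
Proof proposal for \cref{lem:universal quantum model}.

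The plan is to verify directly that the triplet $\qmodel$ defined by \cref{eq:trivial qmodel} satisfies every requirement of \cref{def:quantum model}. The observation that renders all the verifications routine is that both $\qstate(\prepdens)$ and $\qeffect(\effectdens)$ are diagonal in the fixed orthonormal basis $\{\ket{j}\}_{j=1}^d$, so that all operator-theoretic conditions collapse to elementary statements about the real numbers $\{\prepdensarg{\prep_j}\}_j$ and $\{\probarg{\effectdens}{\prep_j}\}_j$, which are in turn controlled by the defining properties of convex mixtures and of the probability assignment $\probarg{\cdot}{\cdot}$ recalled in \cref{sec:operational primitices}.

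First I would check convex-linearity of $\qstate$ and $\qeffect$: the map $\prepdens \mapsto \prepdensarg{\prep_j}$ extracting the $j$-th weight of a preparation density is convex-linear by the definition of $\convhull\asets$, and $\effectdens \mapsto \probarg{\effectdens}{\prep_j}$ is convex-linear because $\probarg{\cdot}{\cdot}$ is convex-linear in its first argument; multiplying these scalars by the fixed operators $\ketbra{j}$ and summing preserves convex-linearity. Next, $\qstate(\prepdens)$ is diagonal with entries $\prepdensarg{\prep_j}\geq 0$ that sum to $1$, hence positive semidefinite and trace-one, i.e.\ a density matrix. For the POVM conditions, $\qeffect(\effectdens)$ is diagonal with entries $\probarg{\effectdens}{\prep_j}\in[0,1]$, so both $\qeffect(\effectdens)$ and $\id\hil-\qeffect(\effectdens)$ (diagonal with entries $1-\probarg{\effectdens}{\prep_j}\in[0,1]$) are positive semidefinite.

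Then I would treat the two structural constraints on $\qeffect$ and the statistics. For \cref{eq:trivial qeffect}: if $\effect = \eventset|\proc$ then $\probarg{\effect}{\prep_j}=1$ for all $j$ by the trivial-event property of $\probarg{\cdot}{\cdot}$, so $\qeffect(\effect) = \sum_{j=1}^d \ketbra{j} = \id\hil$. For \cref{eq:additivity qeffect}: given $\effect_0,\{\effect_i\}_{i=1}^n$ as in \cref{eq:gen add}, an induction on $n$ using the two-effect additivity axiom of \cref{sec:operational primitices} --- writing $\cup_{i=1}^n \event_i = (\cup_{i=1}^{n-1}\event_i) \cup \event_n$ with the two parts disjoint, the intermediate effect $(\cup_{i=1}^{n-1}\event_i)|\proc$ living in $\effectset$ where the probability axioms apply --- yields $\probarg{\effect_0}{\prep_j} = \sum_{i=1}^n \probarg{\effect_i}{\prep_j}$ for all $j$, and summing the corresponding diagonal operators gives $\qeffect(\effect_0) = \sum_{i=1}^n \qeffect(\effect_i)$. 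Finally, for \cref{eq:qstats}: the product $\qeffect(\effectdens)\qstate(\prepdens)$ is diagonal with entries $\probarg{\effectdens}{\prep_j}\prepdensarg{\prep_j}$, so $\Tr[\qeffect(\effectdens)\qstate(\prepdens)] = \sum_{j=1}^d \probarg{\effectdens}{\prep_j}\prepdensarg{\prep_j} = \probarg{\effectdens}{\prepdens}$, the last equality by convex-linearity of $\probarg{\cdot}{\cdot}$ in the preparation argument applied to $\prepdens = \sum_{j=1}^d \prepdensarg{\prep_j}\prep_j$.

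None of these steps involves any genuine difficulty, so there is no real obstacle here; the only point demanding a moment's care is upgrading the pairwise-additivity axiom to the $n$-fold coarse-graining of \cref{eq:additivity qeffect}, which is the straightforward induction indicated above, together with the remark that the intermediate coarse-grained events lie in $\effectset$ even when they do not lie in $\asete$, so that the probability axioms of \cref{sec:operational primitices} remain applicable throughout.
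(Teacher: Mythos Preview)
Your proposal is correct and follows essentially the same approach as the paper: direct verification of each clause of \cref{def:quantum model}, with the statistics computed via the diagonal form to recover $\probarg{\effectdens}{\prepdens}$. The paper's own proof is considerably terser (it simply asserts that the constraints follow from the analogous properties of $\probarg{\cdot}{\cdot}$ and then computes the trace), whereas you spell out each check and even flag the small induction needed to pass from pairwise to $n$-fold additivity --- but the underlying argument is identical.
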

\begin{proof}
These clearly satisfy the constraints that a quantum model must have. In particular, the convex-linearity and the properties of \cref{eq:trivial qeffect,eq:additivity qeffect} of the mapping $\qeffect$ follow from the analogous properties of $\probarg{\cdot}{\cdot}$. The statistics of the quantum model satisfy
\begin{subequations}
\begin{align}
    \Tr[\qeffect(\effectdens)\qstate(\prepdens)] &= \sum_{j,k = 1}^d \probarg{\effectdens}{\prep_j} \prepdensarg{\prep_k} \delta_{j,k} \\
    &= \probarg{\effectdens}{\prepdens},
\end{align}
\end{subequations}
as expected.
\end{proof}

We need to restrict the scope slightly to match the quantum prepare-and-measure scenarios of \citeref{gitton_solvable_2022}:%
\begin{definition}
\label{def:special pm scenario}
We say that a prepare-and-measure scenario $\apm$ (\cref{def:pm scenario}) is outcome-complete if for all $\effect\in\asete$ such that $\effect = \event|\proc$ for some $\event\in\sigmaalg$ and $\proc\in\procset$, there exists $\{\event_i\in\sigmaalg\}_{i=1}^n$ such that $\effect_i = \event_i|\proc \in \asete$ and such that $\event \cup (\cup_{i=1}^n \event_i) = \eventset$.
\end{definition}
We associate to a prepare-and-measure scenario a quantum version thereof through a quantum model:
\begin{definition}
\label{def:associated pm scenario}
Let $\apm$ be an outcome-complete prepare-and-measure scenario. Given a quantum model $\qmodel$, we define the quantum prepare-and-measure scenario associated to $\apm$ through
\begin{subequations}
\begin{align}
    \qsets &= \convhull{\{\qstate(\prep)\}_{\prep\in\asets}}, \label{eq:def qsets} \\
    \qsete &= \convhull{\{\qeffect(\effect)\}_{\effect\in\asete}}. \label{eq:def qsete}
\end{align}
\end{subequations}
\end{definition}
Such quantum prepare-and-measure scenarios match the setting of \citeref{gitton_solvable_2022}:%
\begin{lemma}
\label{lem:valid quantum pm scenario}
For any outcome-complete prepare-and-measure scenario $\apm$, the quantum prepare-and-measure scenario $\pmargs{\qsets}{\qsete}$ of \cref{def:associated pm scenario} satisfies definitions 1 and 2 of \citeref{gitton_solvable_2022}.
\end{lemma}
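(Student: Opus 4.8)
The plan is to unfold Definitions 1 and 2 of \citeref{gitton_solvable_2022} and to check their clauses one by one against the construction of \cref{def:associated pm scenario}. This is a verification rather than a computation: essentially every clause reduces to a property that is either built into the convex-hull definitions \cref{eq:def qsets,eq:def qsete} or is inherited directly from \cref{def:quantum model} and the requirements on $\apm$.

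First I would dispatch the structural clauses. That $\hil$ is a finite-dimensional Hilbert space and that $\qsets$, $\qsete$ are finitely generated and convex is immediate, since they are convex hulls of the finite sets $\asets$, $\asete$ (finite by \cref{def:pm scenario}). That every element of $\qsets$ is a density operator and every element of $\qsete$ is a POVM element follows because each generator $\qstate(\prep)$ is positive semidefinite and trace-one and each generator $\qeffect(\effect)$ satisfies $0 \le \qeffect(\effect) \le \id\hil$ by \cref{def:quantum model}, and positivity, being trace-one, and $\id\hil - E \ge 0$ are all preserved under convex combinations. The zero operator and $\id\hil$ belong to $\qsete$ because $\emptyset,\eventset\in\asete$ (\cref{def:pm scenario}): indeed $\qeffect(\eventset) = \id\hil$ by \cref{eq:trivial qeffect}, and applying the additivity \cref{eq:additivity qeffect} to the identity $\eventset = \eventset\cup\emptyset$ (with $\eventset\cap\emptyset=\emptyset$, via the ``do nothing'' procedure) gives $\qeffect(\eventset) = \qeffect(\eventset) + \qeffect(\emptyset)$, hence $\qeffect(\emptyset) = 0$. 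Should the reproduction of the operational statistics via the Born rule be one of the clauses, it is precisely \cref{eq:qstats}.

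Next I would treat the clause of Definition 2 that concerns the completeness of the measurements, which is where the \emph{outcome-complete} hypothesis of \cref{def:special pm scenario} enters: for any $\effect = \event|\proc\in\asete$ it guarantees the presence in $\asete$ of effects $\{\effect_i = \event_i|\proc\}_{i=1}^n$ whose events, together with $\event$, cover $\eventset$, and \cref{eq:additivity qeffect} together with \cref{eq:trivial qeffect} then translates this operational cover into the operator statement that the images $\qeffect(\effect),\qeffect(\effect_1),\dots,\qeffect(\effect_n)$ combine to $\id\hil$, which is the complete-measurement structure that the referenced definition demands of $\qsete$. The main obstacle I anticipate is not mathematical but one of transcription: one must pin down the exact conventions of Definitions 1 and 2 of \citeref{gitton_solvable_2022} --- in particular how individual effects versus labelled measurements are packaged, whether redundant generators are permitted, and which closure conditions are imposed on the effect set --- and confirm that the items checked above discharge every one of them, with \cref{def:special pm scenario} having been tailored precisely to supply the completeness requirement. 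Once that dictionary between the two formalisms is fixed, no individual step requires a nontrivial argument.
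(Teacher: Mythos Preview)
Your proposal is correct and follows essentially the same approach as the paper: the paper's own proof is a terse version of yours, checking convexity and non-emptiness, showing $0_\hil,\id\hil\in\qsete$ via $\qeffect(\emptyset)$ and $\qeffect(\eventset)$, and invoking outcome-completeness with \cref{eq:additivity qeffect} for the measurement-completion clause. The only difference is that the paper phrases the completion clause as ``for all $N\in\qsete$ there exist $\{N_k\in\qsete\}_k$ with $N+\sum_k N_k=\id\hil$'' rather than only for generators; the extension from extremal $\qeffect(\effect)$ to arbitrary convex combinations is immediate once $0_\hil\in\qsete$ (scale each completing effect by the corresponding convex weight), and your caveat about pinning down the exact conventions of the referenced definitions already covers this.
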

\begin{proof}
The convexity and non-emptiness are trivial. Since $\emptyset,\eventset\in\asete$ (by \cref{def:pm scenario}), we also have that $\qeffect(\emptyset),\qeffect(\eventset) \in \qsete$. \Cref{eq:trivial qeffect} implies that $\qeffect(\eventset) = \id\hil$, and \cref{eq:additivity qeffect} is easily seen to imply that $\qeffect(\emptyset) = 0_{\hil}$. This shows that $0_\hil,\id\hil \in \qsete$. It only remains to show that for all $N \in \qsete$, there exist $\{N_k\in\qsete\}_k$ such that $N + \sum_k N_k = \id\hil$, but this follows directly from \cref{def:special pm scenario} together with \cref{eq:additivity qeffect}.
\end{proof}

Before turning to the proof of \cref{th:reduced space equivalence}, we first show a useful lemma.
\begin{lemma}
\label{lem:reduced space indistinguishability}
Let $\apm$ be an outcome-complete prepare-and-measure scenario, and let $\qmodel$ be a quantum representation thereof. Let $\mainr \subseteq \linops{\hil}$ be the associated reduced space (definition 3 of \citeref{gitton_solvable_2022}), and $\proj{\mainr}{\cdot} : \linops{\hil} \to \mainr$ be the corresponding orthogonal projection. Then, the following holds: for all $\prepdens,\prepdens' \in \convhull\asets$, for all $\effectdens,\effectdens' \in \convhull\asete$, 
\begin{subequations}
\begin{align}
    \proj{\mainr}{\qstate(\prepdens)} = \proj{\mainr}{\qstate(\prepdens')} &\lequiv \prepdens \indist{\asete} \prepdens', \label{eq:reduced space indist state}\\
    \proj{\mainr}{\qeffect(\effectdens)} = \proj{\mainr}{\qeffect(\effectdens')} &\lequiv \effectdens \indist{\asets} \effectdens'. \label{eq:reduced space indist effect}
\end{align}
\end{subequations}
\end{lemma}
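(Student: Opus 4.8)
The plan is to rewrite both indistinguishability relations and both reduced-space equalities as Hilbert--Schmidt orthogonality conditions in $\linops\hil$, and to bridge the two using the defining property of the reduced space $\mainr$. First I would record the translation of indistinguishability. By the statistics condition \cref{eq:qstats}, for $\prepdens,\prepdens'\in\convhull\asets$ one has $\probarg{\effect}{\prepdens}-\probarg{\effect}{\prepdens'} = \Tr[\qeffect(\effect)(\qstate(\prepdens)-\qstate(\prepdens'))]$ for every $\effect\in\asete$, so $\prepdens\indist{\asete}\prepdens'$ holds if and only if this expression vanishes for all $\effect\in\asete$; since the map $X\mapsto\Tr[X(\qstate(\prepdens)-\qstate(\prepdens'))]$ is linear and the linear span of $\{\qeffect(\effect)\}_{\effect\in\asete}$ equals $\mathrm{span}(\qsete)$, this is equivalent to $\qstate(\prepdens)-\qstate(\prepdens')$ being Hilbert--Schmidt-orthogonal to $\mathrm{span}(\qsete)$. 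Swapping the roles of preparations and effects in the same argument, $\effectdens\indist{\asets}\effectdens'$ holds if and only if $\qeffect(\effectdens)-\qeffect(\effectdens')$ is orthogonal to $\mathrm{span}(\qsets)$.

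Next I would invoke the construction of the reduced space in definition 3 of \citeref{gitton_solvable_2022}: $\mainr$ is built precisely so that orthogonality to $\mainr$, tested against a difference $\qstate(\prepdens)-\qstate(\prepdens')$ of quantum states, coincides with orthogonality to $\mathrm{span}(\qsete)$, and, tested against a difference $\qeffect(\effectdens)-\qeffect(\effectdens')$ of quantum effects, coincides with orthogonality to $\mathrm{span}(\qsets)$. Granting this, the lemma follows by a chain of equivalences. Since $\proj{\mainr}{\cdot}$ is linear, $\proj{\mainr}{\qstate(\prepdens)} = \proj{\mainr}{\qstate(\prepdens')}$ is equivalent to $\qstate(\prepdens)-\qstate(\prepdens')\perp\mainr$, which by the reduced-space property is equivalent to $\qstate(\prepdens)-\qstate(\prepdens')\perp\mathrm{span}(\qsete)$, which by the first step is equivalent to $\prepdens\indist{\asete}\prepdens'$; this is \cref{eq:reduced space indist state}. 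The identical argument with states and effects interchanged gives \cref{eq:reduced space indist effect}.

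The main obstacle is the middle step: one must check that definition 3 of \citeref{gitton_solvable_2022} does deliver the orthogonality-coincidence property in exactly the form used above, and in particular that it is asserted only on the span of the state differences $\qstate(\prepdens)-\qstate(\prepdens')$ and on the span of the effect differences $\qeffect(\effectdens)-\qeffect(\effectdens')$ --- demanding it on all of $\linops\hil$ would force $\mathrm{span}(\qsets)=\mathrm{span}(\qsete)$, which need not hold. If that property is derived from, rather than identical to, the construction of $\mainr$ there, I would isolate it as a short preliminary claim before running the chain of equivalences. Everything else is routine: the convex-linearity of $\qstate$ and $\qeffect$, the passage from $\asete$ to $\convhull\asete$ and to the generating set of $\qsete$, and the harmless fact that $\emptyset,\eventset\in\asete$.
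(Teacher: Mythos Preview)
Your argument is correct and matches the paper's proof in structure: both reduce the projection equality to equal Hilbert--Schmidt pairings with $\qsete$ (resp.\ $\qsets$) and then invoke \cref{eq:qstats}. The step you flag as the main obstacle is exactly what the paper resolves by citing two results from \citeref{gitton_solvable_2022}: Corollary~B.6 (the projections $\proj{\mainr}{\qsete}$ span $\mainr$, and likewise for $\proj{\mainr}{\qsets}$) together with Proposition~4 (the reduced inner product $\scal{\proj{\mainr}{N}}{\proj{\mainr}{\sigma}}{\mainr}$ equals $\Tr[N\sigma]$ for $N\in\qsete$, $\sigma\in\qsets$); combined, these yield precisely your ``orthogonality-coincidence'' property on the relevant spans, so your caution about restricting to state and effect differences is well placed.
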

\begin{proof}
Suppose that $\prepdens,\prepdens' \in \convhull\asets$ are such that 
\begin{equation}
\label{eq:temp mainr 1}
\proj{\mainr}{\qstate(\prepdens)} = \proj{\mainr}{\qstate(\prepdens')}.
\end{equation}
Corollary B.6 of \citeref{gitton_solvable_2022} states that the linear span of $\proj{\mainr}{\qsete}$ is the whole $\mainr$. Thus, \cref{eq:temp mainr 1} is equivalent to: for all $\effect\in\asete$,
\begin{equation}
\label{eq:temp mainr 2}
    \scal{\proj{\mainr}{\qeffect(\effect)}}{\proj{\mainr}{\qstate(\prepdens)}}{\mainr}
    =
    \scal{\proj{\mainr}{\qeffect(\effect)}}{\proj{\mainr}{\qstate(\prepdens')}}{\mainr}.
\end{equation}
Thanks to proposition 4 of \citeref{gitton_solvable_2022}, \cref{eq:temp mainr 2} is equivalent to: for all $\effect\in\asete$,
\begin{equation}
\label{eq:temp mainr 3}
    \Tr[\qeffect(\effect)\qstate(\prepdens)]
    =
    \Tr[\qeffect(\effect)\qstate(\prepdens')].
\end{equation}
But since $\qmodel$ is a quantum model for $\apm$, we have thanks to \cref{eq:qstats} that \cref{eq:temp mainr 3} is equivalent to: for all $\effect\in\asete$,
\begin{equation}
    \probarg{\effect}{\prepdens} = \probarg{\effect}{\prepdens'},
\end{equation}
or, equivalently, $\prepdens \indist{\asete} \prepdens'$. This concludes the proof of \cref{eq:reduced space indist state}. The proof of \cref{eq:reduced space indist effect} is completely analogous: corollary B.6 of \citeref{gitton_solvable_2022} also states that the linear span of $\proj{\mainr}{\qsets}$ is the whole $\mainr$, and the rest of the argument follows through.
\end{proof}

\subsection{Proof of equivalence}

We now turn to the proof of \cref{th:reduced space equivalence}.

\begin{theorem}[Formalization of \cref{th:reduced space equivalence}]
Consider an outcome-complete (\cref{def:special pm scenario}) prepare-and-measure scenario $\apm$, an arbitrary quantum model $\qmodel$ (\cref{def:quantum model}) thereof, and the associated quantum prepare-and-measure scenario $\pmargs{\qsets}{\qsete}$ (\cref{def:associated pm scenario}). It holds that $\apm$ admits a relative noncontextual ontological model with respect to the reference $\apm$ if and only if $\pmargs{\qsets}{\qsete}$ admits a Riemann integrable (definition 17 of \citeref{gitton_solvable_2022}) operationally noncontextual ontological model (definition 6 of \citeref{gitton_solvable_2022}).
\end{theorem}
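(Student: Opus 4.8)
The plan is to prove both directions of the equivalence by transporting ontological models between the operational prepare-and-measure scenario $\apm$ and the associated quantum prepare-and-measure scenario $\pmargs{\qsets}{\qsete}$, using \cref{lem:reduced space indistinguishability} as the bridge. That lemma is exactly what is needed: it identifies the reduced-space indistinguishability of quantum states (resp. effects) with the operational indistinguishability $\indist{\asete}$ (resp. $\indist{\asets}$), and the latter is precisely the notion entering \cref{def:classical model} when the reference is $\apm$ itself, i.e. when $\rsete = \asete$ and $\rsets = \asets$. All other ingredients of \cref{def:ontological model} (normalization, the trivial-event condition, additivity under coarse-grainings, convex-linearity, reproduction of statistics) transfer mechanically on both sides via \cref{eq:qstats}, so the content is carried by this single lemma plus bookkeeping.

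For the forward direction, suppose $\apm$ admits a relative noncontextual ontological model $\omodel$ with respect to the reference $\apm$. Every element of $\qsets$ (resp. $\qsete$) has the form $\qstate(\prepdens)$ (resp. $\qeffect(\effectdens)$) for some $\prepdens\in\convhull\asets$ (resp. $\effectdens\in\convhull\asete$), so I would define the ontic representation of $\proj{\mainr}{\qstate(\prepdens)}$ to be $\ostate{\prepdens}$ and that of $\proj{\mainr}{\qeffect(\effectdens)}$ to be $\orep{\effectdens}$. This is well-defined and genuinely factors through the reduced-space projection precisely because $\proj{\mainr}{\qstate(\prepdens)} = \proj{\mainr}{\qstate(\prepdens')}$ is equivalent to $\prepdens\indist{\asete}\prepdens'$ by \cref{eq:reduced space indist state}, whence $\ostate{\prepdens}=\ostate{\prepdens'}$ by \cref{eq:preparation relative noncontextuality}, and symmetrically for effects via \cref{eq:reduced space indist effect,eq:effect relative noncontextuality}. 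Convex-linearity of the resulting maps follows from \cref{eq:convex-linear}, the response-function conditions from those of $\orep{}$, and the statistics from $\Tr[\qeffect(\effectdens)\qstate(\prepdens)] = \probarg{\effectdens}{\prepdens}$ of \cref{eq:qstats}. Since the ontic space is finite, the model is trivially Riemann integrable in the sense of definition 17 of \citeref{gitton_solvable_2022}.

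For the reverse direction, suppose $\pmargs{\qsets}{\qsete}$ admits a Riemann integrable operationally noncontextual ontological model. Here I would first invoke the fact, established in \citeref{gitton_solvable_2022} (the discretization alluded to in the footnote to \cref{def:ontological model}), that a Riemann integrable operationally noncontextual model may be taken to have a finite ontic space without loss of generality. Pulling such a finite model back along the quantum model, define $\ostate{\prepdens}$ to be the ontic distribution assigned to $\qstate(\prepdens)$ and $\orep{\effectdens}$ the ontic response function assigned to $\qeffect(\effectdens)$. The noncontextuality conditions of \cref{def:classical model} hold: if $\prepdens\indist{\asete}\prepdens'$ then by \cref{eq:reduced space indist state} the reduced-space projections of $\qstate(\prepdens)$ and $\qstate(\prepdens')$ coincide, so since the operationally noncontextual model depends only on that projection, $\ostate{\prepdens}=\ostate{\prepdens'}$; symmetrically for effects via \cref{eq:reduced space indist effect}. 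The remaining structural conditions transfer back directly, and the statistics follow from $\probarg{\effectdens}{\prepdens} = \Tr[\qeffect(\effectdens)\qstate(\prepdens)]$ together with the reproduction property of the quantum-level model.

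The anticipated obstacle is not conceptual but a matter of aligning definitions precisely. One must (i) check that the above translation matches the precise form of definition 6 of \citeref{gitton_solvable_2022}, in particular whether operational noncontextuality there is phrased directly through the reduced-space projection or through operational equivalences on the quantum prepare-and-measure scenario, and (ii) handle the finite-versus-infinite ontic-space mismatch, which is exactly where the Riemann integrability hypothesis earns its place: it is what licenses the reduction to a finite ontic space so that one lands inside our \cref{def:ontological model}, while conversely our standing finiteness assumption is what makes Riemann integrability automatic in the other direction. I expect the definition-matching and the invocation of the finiteness reduction from \citeref{gitton_solvable_2022} to be the only delicate steps; the transfer of normalization, additivity, convex-linearity and statistics-reproduction is routine.
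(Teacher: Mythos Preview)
Your proposal is correct and follows essentially the same approach as the paper: in both directions one transports the ontological model along the quantum-model maps, using \cref{lem:reduced space indistinguishability} to establish well-definedness (forward) and noncontextuality (reverse), with the Riemann integrability hypothesis invoked via the finiteness reduction of \citeref{gitton_solvable_2022} exactly as you anticipate. The paper carries out each definitional check (normalization via \cref{eq:trivial qeffect}, additivity via \cref{eq:additivity qeffect} together with the linear extendability of $\orep{}'$, statistics via proposition~4 of \citeref{gitton_solvable_2022}) in slightly more detail, but you have correctly identified these as routine and named the only two genuinely delicate steps.
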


\begin{proof} ``$\Longrightarrow$''.
Suppose that $\apm$ admits a relative noncontextual ontological model with respect to the reference $\apm$. Denote this model as
\begin{equation}
\Big( \ospace, \big\{ \ostate{\prepdens} \big| \prepdens \in\convhull\asets\big\} , \big\{ \orep{\effectdens} \big| \effectdens\in\convhull\asete \big\} \Big).
\end{equation}
We will construct an operationally noncontextual ontological model for $\pmargs{\qsets}{\qsete}$ that satisfies definition 6 of \citeref{gitton_solvable_2022}. Denote the reduced space of $\pmargs{\qsets}{\qsete}$ as $\mainr$, and the operationally noncontextual ontological model as
\begin{equation}
\Big(\ospace', \big\{\ostate{\bar\sigma}' \big| \bar\sigma \in \proj{\mainr}{\qsets}\big\}, \big\{ \orep{\bar N}' \big| \bar N \in \proj{\mainr}{\qsete}\big\} \Big).
\end{equation}
Define $\ospace' = \ospace$. Then, for all $\bar\sigma\in \proj{\mainr}{\qsets}$, define $\ostate{\bar\sigma}$ as follows: choose any $\prepdens \in \convhull\asets$ such that $\bar\sigma = \proj{\mainr}{\qstate(\prepdens)}$ (thanks to \cref{eq:def qsets} and the convex-linearity of the map $\qstate$ implied by \cref{def:quantum model}, the existence of such a $\prepdens \in \convhull{\asets}$ is guaranteed), and then define $\ostate{}'$ through
\begin{equation}
    \label{eq:tentative def of quantum mu}
    \ostate{\bar\sigma}' = \ostate{\prepdens}.
\end{equation}
This definition is well-posed: suppose that there exist $\prepdens,\prepdens'\in\convhull{\asets}$ such that
\begin{equation}
    \bar \sigma = \proj{\mainr}{\qstate(\prepdens)} = \proj{\mainr}{\qstate(\prepdens')}.
\end{equation}
Then, \cref{lem:reduced space indistinguishability} implies that $\prepdens \indist{\asete} \prepdens'$, and the relative noncontextuality of the ontic state distribution (\cref{eq:preparation relative noncontextuality}) implies that $\ostate{\prepdens} = \ostate{\prepdens'}$. The normalization and nonnegativity of $\ostate{}$ transfer to $\ostate{}'$ directly. To see the convex-linearity of $\ostate{}'$, let $p\in [0,1]$ and $\bar \sigma, \bar\sigma' \in \proj{\mainr}{\qsets}$. Choose $\prepdens,\prepdens' \in \convhull\asets$ such that $\bar \sigma = \proj{\mainr}{\qstate(\prepdens)}$, $\bar\sigma' = \proj{\mainr}{\qstate(\prepdens')}$. Then, $p \bar \sigma + (1-p) \bar\sigma' \in \proj{\mainr}{\qsets}$ is such that $p \bar \sigma + (1-p) \bar\sigma' = \proj{\mainr}{\qstate(p\prepdens + (1-p)\prepdens')}$. Thus, by definition of $\ostate{}'$ and convex-linearity of $\ostate{}$, we have
\begin{equation}
    \ostate{p\bar \sigma + (1-p)\bar \sigma'}' = \ostate{p \prepdens + (1-p) \prepdens'} = p\ostate{\prepdens} + (1-p)\ostate{\prepdens'} = p\ostate{\bar\sigma}' + (1-p) \ostate{\bar\sigma'}',
\end{equation}
as expected. Turning to ontic response functions, define $\orep{}'$ analogously: for all $\bar N \in \proj{\mainr}{\qsete}$, for any $\effectdens\in\convhull\asete$ such that $\bar N = \proj{\mainr}{\qeffect(\effectdens)}$ (using \cref{eq:def qsete} and the convex-linearity of the map $\qeffect$ implied by \cref{def:quantum model}), define
\begin{equation}
\label{eq:def orep barn}
    \orep{\bar N}' = \orep{\effectdens}.
\end{equation}
This is well-posed again thanks to \cref{lem:reduced space indistinguishability} and the relative noncontextuality of the response function $\orep{}$ (if $\effectdens,\effectdens'$ are valid choices in \cref{eq:def orep barn}, then $\effectdens\indist{\asets}\effectdens'$, and then \cref{eq:effect relative noncontextuality} implies that $\orep{\effectdens} = \orep{\effectdens'}$). The range and convex-linearity of $\orep{}'$ are inherited from those of $\orep{}$, following the arguments laid out for $\ostate{}$ and $\ostate{}'$. The normalization of $\orep{}'$, initially stated as equation (2.12b) in \citeref{gitton_solvable_2022}, simplifies to equation (2.23) of \citeref{gitton_solvable_2022} that states that it is sufficient to require, for all $\hv \in \ospace$, that\footnote{The argument is laid out as part of the proof of theorem 1 of \citeref{gitton_solvable_2022} --- its essence is that one can extend $\orep{}'$ linearly.}
\begin{equation}
\label{eq:to be shown orep}
\orep{\proj{\mainr}{\id\hil}}'(\hv) = 1.
\end{equation} 
Recall that $\id\hil \in \qsete$ as shown in \cref{lem:valid quantum pm scenario}, and that $\proj{\mainr}{\qeffect(\eventset)} = \proj{\mainr}{\id\hil}$ (\cref{def:quantum model}) where $\eventset \in \asete$ (\cref{def:pm scenario}). We thus have:
\begin{equation}
    \orep{\proj{\mainr}{\id\hil}}'(\hv) = \orep{\eventset}(\hv) = 1,
\end{equation}
where the last equality follows from \cref{eq:trivial response function}. This thus establishes \cref{eq:to be shown orep}. Lastly, we need to verify that the operationally noncontextual ontological model reproduces the relevant statistics: this reads, for all $\bar\sigma\in\proj{\mainr}{\qsets}$, for all $\bar N \in \proj{\mainr}{\qsete}$, for any $\prepdens\in\convhull\asets$ and $\effectdens\in\convhull\asete$ such that $\bar\sigma = \proj{\mainr}{\qstate(\prepdens)}$ and $\bar N = \proj{\mainr}{\qeffect(\effectdens)}$,
\begin{subequations}
\begin{align}
    \sum_{\hv \in \ospace} \orep{\bar N}'(\hv) \ostate{\bar\sigma}'(\hv) 
    &= \sum_{\hv\in\ospace} \orep{\effectdens}(\hv) \ostate{\prepdens}(\hv) &\eqexpl{\cref{eq:tentative def of quantum mu,eq:def orep barn}}\\
    &= \probarg{\effectdens}{\prepdens} &\eqexpl{\cref{eq:ontic statistics}}\\
    &= \Tr[\qeffect(\effectdens)\qstate(\prepdens)] &\eqexpl{\cref{eq:qstats}}\\
    &= \scal{\proj{\mainr}{\qeffect(\effectdens)}}{\proj{\mainr}{\qstate(\prepdens)}}{\mainr} &\eqexpl{proposition 4 of \citeref{gitton_solvable_2022}}\\
    &= \scal{\bar N}{\bar\sigma}{\mainr},
\end{align}
\end{subequations}
thus reproducing the constraint of equation (2.13) of \citeref{gitton_solvable_2022}. To conclude, note that since $\ospace$ is finite, the resulting operationally noncontextual ontological model is Riemann integrable (definition 17 of \citeref{gitton_solvable_2022}).

``$\Longleftarrow$''. Let us now investigate the converse direction. Suppose that $\pmargs{\qsets}{\qsete}$ admits a Riemann integrable operationally noncontextual ontological model that we denote
\begin{equation}
\Big(\ospace', \big\{\ostate{\bar\sigma}' \big| \bar\sigma \in \proj{\mainr}{\qsets}\big\}, \big\{ \orep{\bar N}' \big| \bar N \in \proj{\mainr}{\qsete}\big\} \Big).
\end{equation}
Thanks to theorem 3 of \citeref{gitton_solvable_2022}, we can assume that $\ospace'$ is a finite set. Indeed, theorem 3 of \citeref{gitton_solvable_2022} implies that, without loss of generality, $|\ospace'| \leq \dim(\mainr)^2$, and this upper bound is finite: indeed, $\mainr\subseteq\linops{\hil}$ and $\hil$ can be taken to be a finite-dimensional Hilbert space as shown in \cref{lem:universal quantum model} (which ultimately relies on the finiteness of the set $\asets$ as posited in \cref{def:pm scenario}). Let us now define a relative noncontextual ontological model for $\apm$ with respect to the reference $\apm$. This model is denoted
\begin{equation}
\Big( \ospace, \big\{ \ostate{\prepdens} \big| \prepdens \in\convhull\asets\big\} , \big\{ \orep{\effectdens} \big| \effectdens\in\convhull\asete \big\} \Big).
\end{equation}
We let $\ospace = \ospace'$ (a valid choice, since $\ospace'$ is finite). Then, for all $\prepdens\in\convhull\asets$, for all $\effectdens\in\convhull\asete$, for all $\hv\in\ospace$, define
\begin{subequations}
\begin{align}
    \ostate{\prepdens}(\hv) &= \ostate{\proj{\mainr}{\qstate(\prepdens)}}'(\hv), \label{eq:def ostate from ostate'} \\
    \orep{\effectdens}(\hv) &= \ostate{\proj{\mainr}{\qeffect(\effectdens)}}'(\hv). \label{eq:def orep from orep'}
\end{align}
\end{subequations}
The nonnegativity and convex-linearity of $\ostate{}$ and $\orep{}$ are inherited directly from $\ostate{}'$ and $\orep{}'$ (using also the linearity of the maps $\qstate$ and $\qeffect$). The normalization of $\ostate{}$ is inherited from that of $\ostate{}'$. To investigate the normalization of $\orep{}$, consider an arbitrary $\effect\in\asete$ such that $\effect = \eventset|\proc$ for some $\proc\in\procset$. Then, \cref{def:quantum model} implies that $\qeffect(\effect) = \id\hil$. This implies that for all $\hv\in\ospace$,
\begin{equation}
    \orep{\effect}(\hv) = \orep{\proj{\mainr}{\qeffect(\effect)}}'(\hv) = \orep{\proj{\mainr}{\id\hil}}'(\hv) = 1,
\end{equation}
where the last equality follows from the normalization of $\orep{}'$ (equation (2.12b) of \citeref{gitton_solvable_2022}). Thus, the constraint of \cref{eq:trivial response function} in the relative noncontextual ontological model is satisfied. Consider now the constraint of \cref{eq:ontic response function additivity}: for all $\effect_0\in\asete$, $\{\effect_i\in\asete\}_{i=1}^n$ satisfying \cref{eq:gen add}, 
\begin{subequations}
\begin{align}
    \sum_{i=1}^n \orep{\effect_i}(\hv) 
    &= \sum_{i=1}^n \orep{\proj{\mainr}{\qeffect(\effect_i)}}'(\hv) \\
    &= \orep{\proj{\mainr}{\sum_{i=1}^n \qeffect(\effect_i)}}'(\hv) &\eqexpl{$\orep{}'$ uniquely linearly extendable (prop.~9 of \citeref{gitton_solvable_2022})}\\
    &= \orep{\proj{\mainr}{\qeffect(\effect_0)}}'(\hv) &\eqexpl{\cref{eq:additivity qeffect}}\\
    &= \orep{\effect_0}(\hv),
\end{align}
\end{subequations}
so that \cref{eq:ontic response function additivity} is satisfied. \Cref{eq:ontic statistics} is easy to verify: for all $\prepdens\in\convhull\asets$, for all $\effectdens\in\convhull\asete$,
\begin{subequations}
\begin{align}
    \sum_{\hv\in\ospace} \orep{\effectdens}(\hv) \ostate{\prepdens}(\hv) 
    &= \sum_{\hv\in\ospace} \orep{\proj{\mainr}{\qeffect(\effectdens)}}'(\hv) \ostate{\proj{\mainr}{\qstate(\prepdens)}}'(\hv) \\
    &= \scal{\proj{\mainr}{\qeffect(\effectdens)}}{\proj{\mainr}{\qstate(\prepdens)}}{\mainr} &\eqexpl{equation (2.13) of \citeref{gitton_solvable_2022}} \\
    &= \Tr[\qeffect(\effectdens)\qstate(\prepdens)] &\eqexpl{proposition 4 of \citeref{gitton_solvable_2022}}\\
    &= \probarg{\effectdens}{\prepdens}. &\eqexpl{\cref{eq:qstats}}
\end{align}
\end{subequations}
Finally, let us now consider the relative noncontextuality constraints of \cref{eq:preparation relative noncontextuality,eq:effect relative noncontextuality}: suppose that $\prepdens,\prepdens'\in\convhull\asets$ are such that $\prepdens\indist{\asete} \prepdens'$. Then, \cref{lem:reduced space indistinguishability} implies that $\proj{\mainr}{\qstate(\prepdens)} = \proj{\mainr}{\qstate(\prepdens')}$, so that \cref{eq:def ostate from ostate'} implies $\ostate{\prepdens} = \ostate{\prepdens'}$. This establishes \cref{eq:preparation relative noncontextuality}, and the case of \cref{eq:effect relative noncontextuality} is analogous.
\end{proof}

\end{document}